\documentclass{article}

\usepackage[main,final,nonatbib]{neurips_2026}
\usepackage[numbers]{natbib}

\usepackage[T1]{fontenc}
\usepackage{hyperref}
\usepackage{url}
\usepackage{xurl}
\usepackage{microtype}
\usepackage{graphicx}
\usepackage{subcaption}
\usepackage{booktabs}
\usepackage{xcolor}
\usepackage{colortbl}

\definecolor{posgreen}{HTML}{2E8B57}
\definecolor{negred}{HTML}{C0392B}
\definecolor{groupgray}{HTML}{F8F9FA}
\definecolor{algred}{HTML}{C00000}
\definecolor{algblue}{HTML}{0000CD}
\definecolor{harmonyblue}{HTML}{E0F7FA}
\newcommand{\pos}[1]{{\scriptsize\textcolor{posgreen}{+#1}}}
\newcommand{\negc}[1]{{\scriptsize\textcolor{negred}{-#1}}}

\usepackage{multirow}
\usepackage{algorithm}
\usepackage{algorithmic}
\usepackage{float}
\usepackage{placeins}
\usepackage{amsmath}
\usepackage{amssymb}
\usepackage{mathtools}
\usepackage{amsthm}
\usepackage[capitalize,noabbrev]{cleveref}

\theoremstyle{plain}
\newtheorem{theorem}{Theorem}[section]
\newtheorem{proposition}[theorem]{Proposition}
\newtheorem{lemma}[theorem]{Lemma}

\theoremstyle{definition}

\newtheorem{assumption}[theorem]{Assumption}
\theoremstyle{remark}
\newtheorem{remark}[theorem]{Remark}

\title{HARMONY: Hierarchical Anchor Retrieval on Manifold for Oblivious-source Machinery Acoustic Anomaly Detection}

\author{%
  Yuhang Zhang$^{1}$ \quad
  Yang Li$^{2}$ \quad
  Qiang Wu$^{3}$\thanks{Corresponding authors: Qiang Wu
  (\texttt{wuq17@lzu.edu.cn}) and Mingyuan Li
  (\texttt{mingyl@utu.fi}).} \quad
  Mingyuan Li$^{4,5}$\footnotemark[1] \\
  $^{1}$School of Software and Microelectronics, Peking University, Beijing, China \\
  $^{2}$Independent Researcher \\
  $^{3}$Collaborative Innovation Center for Western Ecological Safety, \\
  Lanzhou University, Lanzhou, China \\
  $^{4}$ELLIS Institute of Finland \quad $^{5}$University of Turku \\
  \texttt{zhang\_yuhang@stu.pku.edu.cn} \quad
  \texttt{liyang02@bupt.edu.cn} \\
  \texttt{wuq17@lzu.edu.cn} \quad
  \texttt{mingyl@utu.fi}
}

\begin{document}

\maketitle

\begin{abstract}
%Background
In standard acoustic anomaly detection (ASD), models are usually trained separately for each machine or domain using explicit metadata. 
%Limitation
However, in realistic deployments, machine identifiers are incomplete, unreliable, or simply unavailable. Thus, metadata-free machinery ASD asks a single model to monitor mixed machine populations without machine IDs. 
In this paper, we argue that the main failure mode is not a weak detector family, but representation entanglement. The heterogeneous normal modes overlap in frozen pre-trained feature spaces, making anomaly scoring and approximate retrieval unreliable. We study this problem directly and propose HARMONY (\textbf{H}ierarchical \textbf{A}nchor \textbf{R}etrieval on \textbf{M}anifold for \textbf{O}blivious-source acoustic a\textbf{N}omal\textbf{Y} detection), an anchor-guided geometric repartitioning framework that reorganizes mixed-source features into compact Voronoi regions and reuses this structure for hierarchical retrieval. On DCASE 2020, HARMONY improves mean AUC from 75.40\% to 90.00\% (14.60 percentage points) over the strongest unified baseline and from 83.90\% to 90.00\% over a strong foundation-feature baseline. On a single CPU, its hierarchical retrieval achieves a 7.8$\times$ speedup over exhaustive search with only a 0.20 percentage-point AUC drop. These results suggest that incorporating explicit geometric partitioning can improve both detection accuracy and retrieval efficiency for ASD. Code is available at \url{https://github.com/Zzyh0208/HARMONY}.
\end{abstract}

\section{Introduction}
\label{sec:intro}

In standard acoustic anomaly detection (ASD), models are usually trained separately for each machine or domain using explicit metadata \citep{liu2022anomalous}. This assumption breaks down in realistic deployments, where machine identifiers are incomplete, unreliable, or simply unavailable. The resulting \textit{metadata-free machinery ASD} setting asks a single model to detect anomalies across a mixed stream of machine sounds without knowing which machine generated each sample. This is the practically relevant setting, but it is also where many strong detectors collapse.

A common interpretation is that the solution requires a stronger anomaly detector. We take a different view. When normal data from multiple machine types are pooled together, the dominant problem is often representational: pre-trained feature spaces entangle heterogeneous normal modes, so nearest-neighbor, density-estimation, and regression-based detectors all operate on an ambiguous geometry. Under this mixture, even approximate retrieval structures such as HNSW \citep{malkov2018efficient} only accelerate access to the same entangled neighborhoods. They improve lookup, not the structure being queried.

This leads to the central claim of the paper: in metadata-free machinery ASD, the missing ingredient is not another detector head, but an explicit geometry-aware partition of the mixed normal distribution. If the representation can be reorganized into compact, mode-consistent regions before retrieval, then anomaly scoring becomes more reliable and efficient search becomes possible in the same learned space.

We instantiate this claim with HARMONY, a framework that couples anchor-guided geometric repartitioning with hierarchical retrieval. HARMONY learns a projection that maps mixed-source features into compact Voronoi regions on the hypersphere, then reuses this learned topology as an index that reduces the candidate search space. Rather than treating retrieval as a post-hoc engineering layer, HARMONY makes partitioning and retrieval part of the same modeling decision.

\textbf{Contributions.} Our goal is not to develop a new audio foundation backbone. Instead, we investigate a stricter question: when machine identities are unavailable, can explicit repartitioning of a \textit{frozen} normal feature space restore the separability and efficiency lost in standard unified pipelines? Our main contributions are: 
\textbf{(1) Problem Formulation:} We formulate metadata-free machinery ASD as a representation partitioning problem, providing evidence that manifold entanglement is a bottleneck shared by several detector families. 
\textbf{(2) HARMONY Framework:} We propose an anchor-guided geometric repartitioning framework that reorganizes mixed-source normal data into compact Voronoi regions, intrinsically coupling this structure with hierarchical retrieval. 
\textbf{(3) SoTA Performance and Efficiency:} We provide rigorous analytical and empirical evidence for our approach. On DCASE 2020, HARMONY improves mean AUC by 14.60 percentage points over the strongest unified baseline and by 6.10 percentage points over a foundation-feature baseline, while its hierarchical retrieval achieves a \textbf{7.8}$\times$ CPU speedup over exhaustive search with only a 0.20 percentage-point AUC drop.
\section{Related Work}
\label{sec:related}

\textbf{Feature-Based ASD.} The prevailing unsupervised ASD pipeline is a two-stage representation learning followed by density estimation \citep{alamunsupervised}, with foundation encoders like BYOL-A \citep{niizumi2022byol} as robust feature extractors. Memory-bank methods (e.g., PatchCore) attain SoTA via nearest neighbor search \citep{roth2022towards}, while SimpleNet adds noise injection to train a discriminator on frozen features \citep{liu2023simplenet} and EfficientAD targets scalability through fast feature regression \citep{batzner2024efficientad}. These approaches, however, depend on the geometry of the normal feature distribution \citep{ruff2018deep}; in universal, metadata-free settings, feature distributions from diverse machine types overlap, degrading distance-based discriminability on mixed-source benchmarks.

\textbf{Deep Clustering and Theoretical Foundations.} Unsupervised deep clustering aims to structure latent representations without labels and shares algorithmic foundations with anomaly detection. Methods such as DeepCluster \citep{caron2018deep} and SwAV \citep{caron2020unsupervised} organize latent space by optimizing prototypes. Recent work has formalized the connection to anomaly detection: \citet{fang2024towards} connected representation learning, clustering, and anomaly scoring through a probabilistic mixture model, and \citet{fan2025salvaging} applied class-aware contrastive learning to reduce multi-class confusion. However, a fundamental geometric conflict remains. General representation learning methods like SimCLR \citep{chen2020simple} promote uniformity to maximize information, while anomaly detection requires compactness.

\textbf{Unified Anomaly Detection and Indexing.} The field is rapidly shifting toward unified anomaly detection (UAD), where a single model handles multiple categories or domains. Transformer-based frameworks such as Dinomaly \citep{guo2025dinomaly} adopt a ``less is more'' philosophy to model normal patterns across classes. To cope with missing labels, CADA \citep{guo2024absolute} proposes class-agnostic distribution alignment and RLR \citep{he2024learning} learns unified reference representations to avoid reconstruction shortcuts. For feature modeling, flow-based methods including CFLOW-AD \citep{gudovskiy2022cflow}, UniFlow \citep{zhong2024uniflow} and VQ-Flow \citep{zhou2025vq} apply normalizing flows to capture complex multi-class distributions. Despite their promise, these unified approaches often rely on heavy backbones or complex flow architectures that are computationally costly. Moreover, standard approximate indexing methods such as HNSW \citep{malkov2018efficient} act as passive structures over a frozen feature space and do not resolve the underlying manifold entanglement, producing a trade-off between retrieval precision and inference efficiency.

\section{Preliminaries}
\label{sec:preliminaries}
\textbf{Acoustic Anomaly Detection.} Industrial ASD monitors operating sounds to identify departures from normal machine behavior, such as faults in fans, pumps, sliders, and valves.
\label{sec:task_def}
Let $\mathcal{X} \subset \mathbb{R}^T$ denote the domain of raw audio signals. In the standard unsupervised ASD framework, a system is trained on a dataset $\mathcal{D}_{\text{train}} = \{x_i\}_{i=1}^N$ consisting exclusively of normal samples. The objective is to learn a scoring function $S: \mathcal{X} \to \mathbb{R}$ that quantifies the deviation of a query sample $x_{\text{test}}$ from the learned normal distribution. A decision threshold $t$ is applied such that $x_{\text{test}}$ is classified as anomalous if $S(x_{\text{test}}) > t$. Physically, anomalies in this domain typically manifest as structural deviations, such as broken periodicity, impulsive noises, or spectral energy shifts relative to the normal operational sound.

\textbf{Feature-Based Unsupervised Pipelines.}
\label{sec:feature_asd}
Due to the high dimensionality and temporal redundancy of raw waveforms, the prevailing paradigm operates in a latent feature space. The pipeline typically involves projecting audio spectrograms into an embedding space $\mathcal{Z} \subset \mathbb{R}^D$ via a pre-trained encoder $\phi: \mathcal{X} \to \mathcal{Z}$. Encoders adapted from audio representation learning, such as BYOL-A \citep{niizumi2022byol}, are commonly employed to extract invariant acoustic features. Under this paradigm, the anomaly score $S(x)$ is generally formulated based on density estimation. Common approaches include calculating the Euclidean distance to the nearest normal neighbors stored in a memory bank, or measuring the reconstruction error from a learned probabilistic model. These methods rely on the assumption that normal samples occupy a compact region in $\mathcal{Z}$, while anomalies reside in low-density areas.

\textbf{Metadata-free Machinery ASD.}
\label{sec:universal_setting}
Conventional ASD protocols follow a ``machine-dependent'' approach, training a distinct model $f_m$ for each specific machine ID. While effective in controlled environments, this dependency on explicit metadata creates a scalability bottleneck in industrial scenarios where maintaining thousands of individual models is operationally prohibitive.
This necessitates the metadata-free machinery setting, where a single unified model monitors a diverse fleet of machines without access to machine labels or IDs. Mathematically, the normal training distribution $\mathcal{P}_{\text{norm}}$ is modeled as a mixture of $L$ heterogeneous acoustic modes:
\begin{equation}
    \mathcal{P}_{\text{norm}}(x) = \sum_{k=1}^L \pi_k \mathcal{P}_k(x), \quad \text{s.t.} \quad \sum_{k=1}^L \pi_k = 1,
    \label{eq:mixture_model}
\end{equation}
where $\mathcal{P}_k$ represents the distribution of the $k$-th latent mode (e.g., fan, pump, slider) and $\pi_k$ is its prior probability. In this setting, the specific mode index $k$ for a given sample is unknown during training, requiring the model to capture the global structure of the mixture solely from unlabeled data. The latent mode count $L$ is unknown and need not equal the anchor count $K$ used by HARMONY. Machine metadata are used only to group test scores for evaluation.
\section{Method}
\label{sec:method}

For raw audio $x$, we write $g_\theta(x)$ as shorthand for $g_\theta(\phi(x))$; on encoded inputs we use $g_\theta(z)$ explicitly. We propose HARMONY to resolve feature space ambiguities in metadata-free machinery ASD through active geometric reconstruction. As shown in \cref{fig:framework} and \cref{alg:training}, the framework couples intrinsic subspace estimation, anchor-guided disentanglement, and hierarchical retrieval in a single pipeline.

% Figure 1 Placeholder
\begin{figure*}[t]
    \centering
    \includegraphics[width=1.0\linewidth]{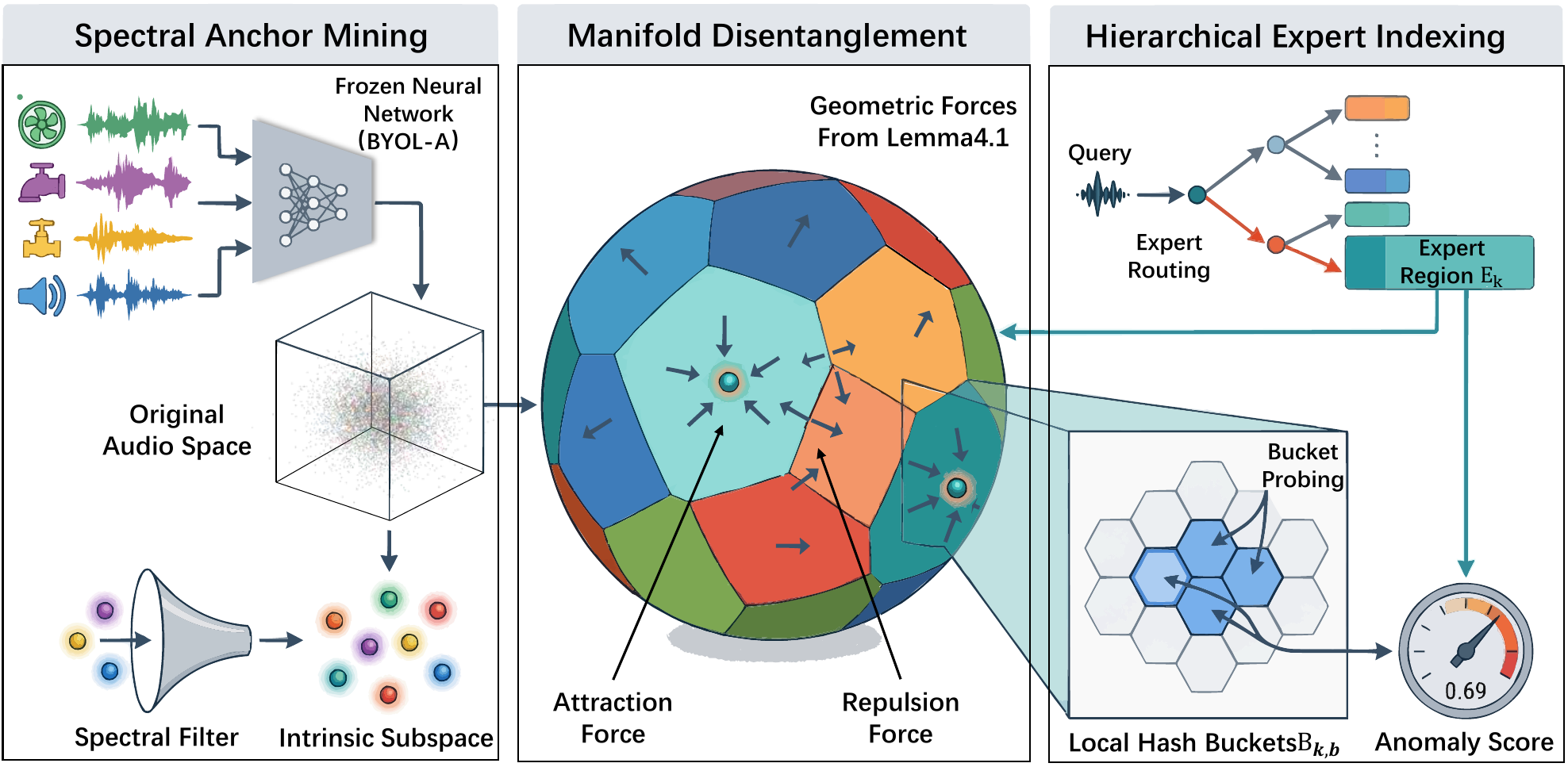}
    \caption{\textbf{Overview of HARMONY.} Spectral anchors define the coarse topology, anchor-guided projection compacts mixed-source features, and the same geometry is reused for hierarchical retrieval.}
    \label{fig:framework}
\end{figure*}

\subsection{Geometric Problem Analysis}
\label{sec:problem_formulation}

Let $\mathcal{Z} = \phi(\mathcal{X})$ be the frozen feature space derived from the pre-trained encoder. The key obstacle is \emph{manifold entanglement}: supports of distinct heterogeneous normal modes $\{\mathcal{P}_k\}_{k=1}^L$ overlap in $\mathcal{Z}$,
\begin{equation}
    \text{Supp}(\phi_{\#}\mathcal{P}_i) \cap \text{Supp}(\phi_{\#}\mathcal{P}_j) \neq \emptyset, \quad \text{for some } i \neq j.
\end{equation}
where $\phi_{\#}\mathcal{P}_i$ denotes the distribution of encoded features $\phi(x)$ when $x\sim\mathcal{P}_i$ (the pushforward measure). This entanglement breaks the compactness assumption used by standard distance-based detectors. Our objective is therefore to learn a projection $g_\theta: \mathcal{Z} \to \mathcal{H} \subset \mathbb{S}^{d-1}$ that reconstructs the mixed distribution into locally compact, globally separated Voronoi regions.

\subsection{Spectral Anchor Mining}

An anchor $a_k$ is a unit-norm prototype that defines a local acoustic region; it is not a machine label, and one machine can occupy several regions. Random anchors are unreliable in a high-dimensional mixed feature space, so we first estimate a lower-dimensional tangent space that captures the dominant acoustic modes while suppressing orthogonal noise. Given covariance matrix $\mathbf{\Sigma} = \frac{1}{N-1} \sum_{i=1}^N (z_i - \bar{z})(z_i - \bar{z})^\top$, we select the smallest intrinsic dimension $d'$ satisfying
\begin{equation}
    d' = \min \left\{ k : \frac{\sum_{j=1}^k \lambda_j}{\sum_{j=1}^D \lambda_j} \ge \eta \right\}.
\end{equation}
where $\lambda_j$ is the $j$-th largest covariance eigenvalue, $\eta$ is the retained-variance threshold, and $D$ is the ambient dimension of the full input space $\mathcal{Z}$. Projecting onto the principal eigenvectors gives filtered features $\tilde{z}_i = U_{d'}^\top (z_i - \bar{z})$. We then initialize $K$ anchors with a diversity-preserving process $\mathcal{G}$ (e.g., K-Means++) on $\{\tilde{z}_i\}$:
\begin{equation}
    \mathcal{A}' \leftarrow \mathcal{G}(\{\tilde{z}_i\}_{i=1}^N, K), \quad \text{with } \mathcal{A}' \subset \mathbb{R}^{d'}.
\end{equation}
We map these prototypes back to $\mathbb{R}^d$ by padding or truncation to obtain the final anchor set $\mathcal{A}$. This provides a stable topological skeleton for the subsequent geometric repartitioning.

\begin{algorithm}
\caption{HARMONY Training and Construction Procedure}
\label{alg:training}
\small % 统一使用 \small 字体，与表格一致
\begin{algorithmic}[1] % [1] 表示开启行号显示
\REQUIRE Unlabeled acoustic data $\mathcal{D}=\{x_i\}_{i=1}^N$, Frozen encoder $\phi$, Anchor count $K$
\ENSURE Trainable projection $g_\theta$, Hierarchical index $\mathcal{T}=\{\mathcal{A}, \mathcal{B}\}$
\STATE ~ % 空行留白
\STATE \textbf{// Phase 1: Subspace Estimation (Setup)}
\STATE \textbf{Extract} raw features $\mathcal{Z} \leftarrow \phi(\mathcal{D})$
\STATE \textbf{Compute} covariance matrix $\mathbf{\Sigma}$ and top eigenvectors $U_{d'}$
\STATE \textbf{Project} $\tilde{\mathcal{Z}} \leftarrow \mathcal{Z} U_{d'}$ \hfill \textit{[Orthogonal noise filtering]}
\STATE \textbf{Initialize} topological prototypes $\mathcal{A}' \leftarrow \text{Spectral}(\tilde{\mathcal{Z}}, K)$
\STATE \textbf{Align} dimensions: Map $\mathcal{A}' \to \mathcal{A} \in \mathbb{R}^d$ via zero-padding
\STATE \textbf{Normalize} anchors to hypersphere: $a_k \leftarrow a_k / \|a_k\|_2$
\STATE ~
\STATE \textcolor{algred}{\textbf{// Phase 2: Active Manifold Disentanglement}}
\FOR{$epoch = 1$ \textbf{to} $E$}
    \FOR{each mini-batch $z_b \in \mathcal{Z}$}
        \STATE \textbf{Compute} soft assignment $p(k|z_b)$ and target pseudo-label $y^*_b$
        \STATE \textcolor{algred}{$\mathcal{L}_{align} \leftarrow -\frac{1}{|B|}\sum \log p(y^*_b|z_b)$ \hfill \textit{[Core push-pull injection]}}
        \STATE \textbf{Update} projection $g_\theta$ via Adam to minimize $\mathcal{L}_{align}$
    \ENDFOR
\ENDFOR
\STATE ~
\STATE \textcolor{algblue}{\textbf{// Phase 3: Hierarchical Indexing (Recovery)}}
\STATE \textbf{Partition} hypersphere $\mathcal{H}$ into expert Voronoi regions $\{E_k\}_{k=1}^K$ using $\mathcal{A}$
\STATE \textcolor{algblue}{\textbf{Build} local hash buckets $\mathcal{B}_{k}$ within each $E_k$ \hfill \textit{[Fast retrieval setup]}}
\STATE ~
\STATE \textbf{return} $g_\theta, \mathcal{T}=\{\mathcal{A}, \mathcal{B}\}$
\end{algorithmic}
\end{algorithm}

\subsection{Anchor-Guided Manifold Disentanglement}
\label{subsec:disentanglement}

Given the extracted topological skeleton $\mathcal{A}=\{a_k\}_{k=1}^K$, we restructure the latent representation space by aligning samples to anchor-defined acoustic modes. This is achieved by learning an anchor-aligned partition of the hyperspherical feature space via a parametric projection $g_\theta: \mathcal{Z} \rightarrow \mathbb{S}^{d-1}$.

\paragraph{Surrogate Alignment Objective.}
For each sample $z_i$, we quantify its affinity to anchors using a temperature-scaled Boltzmann distribution:
\begin{equation}
    p(k|z_i; \theta) = 
    \frac{\exp(\langle g_\theta(z_i), a_k \rangle / \tau)}
    {\sum_{j=1}^K \exp(\langle g_\theta(z_i), a_j \rangle / \tau)},
    \label{eq:prob_final}
\end{equation}
where $\tau>0$ controls assignment sharpness. For fixed features and anchors, changing a shared positive temperature preserves the maximum-similarity assignment and hence the hard Voronoi boundary. During learning, it changes assignment probabilities and the gradient scale, affecting compactness under unequal mode densities. Each sample is pseudo-labeled by its most probable anchor, $y_i^* = \arg\max_k p(k|z_i; \theta)$. We optimize the alignment loss
\begin{equation}
    \mathcal{L}_{align}(\theta)
    = \frac{1}{N}\sum_{i=1}^N \ell_i(z_i; \theta), 
    \quad 
    \ell_i = -\log p(y_i^*|z_i; \theta),
    \label{eq:loss_final}
\end{equation}
which serves as a smooth surrogate objective encouraging anchor-consistent hyperspherical clustering.

\paragraph{Gradient-Induced Geometric Forces.}
Let $u_i = g_\theta(z_i)$ denote the projected feature. For a sample assigned to anchor $k=y_i^*$, the ambient gradient of $\ell_i$ with respect to $u_i$, holding the discrete pseudo-label fixed away from assignment ties, is:
\begin{equation}
    \nabla_{u_i} \ell_i
    = -\frac{1}{\tau} a_k
    + \frac{1}{\tau} \sum_{j=1}^K p(j|z_i)\, a_j.
    \label{eq:grad_raw_final}
\end{equation}

\begin{lemma}[Gradient Decomposition]
\label{lem:gradient}
The negative gradient direction decomposes into two competing geometric forces:
\begin{equation}
    -\nabla_{u_i} \ell_i
    = \underbrace{\frac{1}{\tau}(1-p_i^*)a_{y_i^*}}_{\text{Attraction}}
    - \underbrace{\frac{1}{\tau}\sum_{j\neq y_i^*} p_{ij} a_j}_{\text{Repulsion}},
    \label{eq:gradient_final}
\end{equation}
where $p_i^*$ denotes $p(y_i^*|z_i)$ and $p_{ij}$ denotes $p(j|z_i)$.
\end{lemma}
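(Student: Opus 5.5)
The plan is to start from the explicit ambient gradient in \cref{eq:grad_raw_final} and regroup terms. I will first re-derive that expression, both to check the sign conventions and to make explicit that the pseudo-label $k=y_i^*$ is held fixed, i.e.\ we work away from assignment ties where $\arg\max$ is locally constant.

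For the derivation, write $s_j=\langle u_i,a_j\rangle/\tau$. Then
\[
\ell_i=-s_k+\log\sum_{j=1}^K e^{s_j},
\]
where $k=y_i^*$. The gradient of $s_j$ with respect to $u_i$ is $a_j/\tau$. The gradient of the log-sum-exp term is the softmax-weighted average $\sum_j p(j|z_i)\,a_j/\tau$. This reproduces \cref{eq:grad_raw_final}. Here $u_i$ is treated as a free ambient vector in $\mathbb{R}^d$: the unit-norm constraint only matters if one projects onto the tangent space, and the lemma concerns the ambient gradient.

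The decomposition comes from splitting the sum in \cref{eq:grad_raw_final} into the term $j=y_i^*$ and the terms $j\neq y_i^*$:
\[
\frac1\tau\sum_j p_{ij}a_j=\frac1\tau p_i^*a_{y_i^*}+\frac1\tau\sum_{j\neq y_i^*}p_{ij}a_j .
\]
Combining the first piece with $-\frac1\tau a_{y_i^*}$ gives
\[
\nabla_{u_i}\ell_i=-\frac1\tau(1-p_i^*)a_{y_i^*}+\frac1\tau\sum_{j\neq y_i^*}p_{ij}a_j .
\]
Negating both sides yields \cref{eq:gradient_final}.

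To justify the labels ``Attraction'' and ``Repulsion'', note that $p_i^*\in(0,1)$ and $p_{ij}>0$. So the first term is a nonnegative multiple of the assigned anchor, and the second is a convex-type combination of competing anchors with nonnegative weights, entering with a minus sign. Equivalently, the identity $1-p_i^*=\sum_{j\neq y_i^*}p_{ij}$ lets one rewrite the whole force as $\frac1\tau\sum_{j\neq y_i^*}p_{ij}(a_{y_i^*}-a_j)$, which makes the push-pull interpretation transparent.

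The only real obstacle is the non-differentiability of $y_i^*$ at ties. I would handle it by stating that on the open set where the maximizer of $p(\cdot|z_i)$ is unique, $y_i^*$ is locally constant, so the computation is exact there. Tie points form a measure-zero set of boundaries.
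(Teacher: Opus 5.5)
Your proposal is correct and follows essentially the same route as the paper, which differentiates through the logits $f_k=\langle u_i,a_k\rangle/\tau$ to get $\partial \ell_i/\partial f_k = p(k|z_i)-\mathbb{I}(k=y_i^*)$, recovers the raw gradient $-\frac{1}{\tau}a_{y_i^*}+\frac{1}{\tau}\sum_j p(j|z_i)a_j$, and then separates out the $j=y_i^*$ term. Your explicit treatment of assignment ties and the equivalent form $\frac{1}{\tau}\sum_{j\neq y_i^*}p_{ij}(a_{y_i^*}-a_j)$ are sound additions that the paper leaves implicit.
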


The attraction term pulls samples toward assigned anchors based on prediction uncertainty, while the repulsion term pushes them away from competitors. The same anchors therefore couple representation learning, local partitioning, and inference routing. The detailed formula derivation is in the Appendix~\ref{deriva_lemma4.1}.

\paragraph{Optimization and Separation.}
The alignment objective encourages compact assigned regions. Smoothness gives a descent property, while a uniform within-anchor residual is an additional condition for the separation result. The following result states the resulting separation conditional on bounded within-anchor residuals; Appendix \ref{app:theory} gives its proof and optimization assumptions.

\begin{theorem}[Loss-Induced Manifold Separation]
\label{thm:separation}
Let $\Delta_{\min} = \min_{k\neq k'} \|a_k-a_{k'}\|_2$ denote the minimum Euclidean separation between anchors. Suppose normal samples assigned to anchor $a_k$ remain within residual $\delta_{train}$ of that anchor after training, and let $\eta_{gen}$ bound the additional test-time deviation. Then any two samples $x,x'$ assigned to distinct anchors satisfy:
\begin{equation}
    \|g_\theta(x) - g_\theta(x')\|_2 
    \ge 
    \Delta_{\min} - 2(\delta_{train} + \eta_{gen}).
\end{equation}
\end{theorem}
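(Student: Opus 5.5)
The plan is a direct triangle-inequality argument. First we make the hypotheses precise. A normal training sample $x$ assigned to anchor $a_k$ (i.e.\ $y^*=k$) satisfies $\|g_\theta(x)-a_k\|_2\le\delta_{train}$. For a test sample, the deviation from its assigned anchor is at most $\delta_{train}+\eta_{gen}$; that is, the test point is within $\eta_{gen}$ of the training-residual ball. Since a training sample trivially satisfies the weaker bound $\delta_{train}+\eta_{gen}$, both cases can be handled uniformly by one statement: every sample $x$ assigned to $a_k$ obeys $\|g_\theta(x)-a_k\|_2\le r$ with $r:=\delta_{train}+\eta_{gen}$.

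Next, take $x$ assigned to $a_k$ and $x'$ assigned to $a_{k'}$ with $k\neq k'$. Write
\[
a_k-a_{k'}=(a_k-g_\theta(x))+(g_\theta(x)-g_\theta(x'))+(g_\theta(x')-a_{k'}).
\]
Apply the triangle inequality to get
\[
\|a_k-a_{k'}\|_2\le r+\|g_\theta(x)-g_\theta(x')\|_2+r.
\]
Since $k\neq k'$, the left side is at least $\Delta_{\min}$ by definition. Rearranging yields the claimed bound $\|g_\theta(x)-g_\theta(x')\|_2\ge\Delta_{\min}-2(\delta_{train}+\eta_{gen})$. This is the reverse triangle inequality on the anchors; the bound is vacuous but still true when the right-hand side is negative.

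The main obstacle is not the inequality but the interpretation of the hypotheses. First, "assigned" must mean the same anchor whose residual bound is assumed: the hard assignment $y^*=\arg\max_k\langle g_\theta(x),a_k\rangle$ in \eqref{eq:prob_final}. Second, the residual assumption must be stated so that it covers test points through $\eta_{gen}$. I would state these explicitly as assumptions in the appendix rather than derive them.

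Optionally, the descent property from \cref{lem:gradient} can motivate why $\delta_{train}$ is small. The attraction term shrinks $\|u_i-a_{y_i^*}\|$ as $p_i^*\to1$, since on the sphere $\|u-a\|^2=2-2\langle u,a\rangle$. This is not needed for the logical proof.

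A further optional remark: if $r<\Delta_{\min}/2$, the residual balls are disjoint. The result then says the regions remain separated by a positive margin.
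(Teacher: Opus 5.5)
Your proposal is correct and is essentially the paper's proof: the paper inserts the anchors between $g_\theta(x)$ and $g_\theta(x')$, applies the reverse triangle inequality, and bounds each residual by $\delta_{train}+\eta_{gen}$, which is exactly your uniform radius $r$. The appendix restatement likewise assumes outright that every considered sample lies within $\delta_{train}+\eta_{gen}$ of its assigned anchor, and it notes that such a residual does not follow from smoothness or small cross-entropy; so your optional remark about $p_i^*\to 1$ should stay heuristic (and the descent property is Proposition~\ref{prop:convergence}, not Lemma~\ref{lem:gradient}).
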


Theorem~\ref{thm:separation} formalizes the intended geometric effect of the alignment objective. In particular, once training produces small within-anchor residuals, samples assigned to different anchors remain separated by a positive margin in the learned space. Appendix \ref{app:theory} distinguishes the optimization descent property from the residual assumptions needed for this bound.

Crucially, this geometric structure naturally facilitates anomaly scoring. Since the projection network $g_\theta$ is optimized exclusively on normal data to align with specific anchors, normal samples are compressed into tight $\varepsilon$-balls around these anchors. Anomalies are scored by their distance to the projected normal memory bank. Their separation from normal samples is an empirical property, rather than a consequence guaranteed by the normal-only alignment objective. We evaluate this score separation empirically, together with anchor occupancy stability, in Appendix \ref{app:empirical_theory}.

\subsection{Hierarchical Expert Indexing}

To address the scalability challenge, we transition from linear scanning to a structured hierarchical retrieval. Our index structure $\mathcal{T}$ is intrinsically derived from the learned Voronoi topology, ensuring that the retrieval path aligns with the feature geometry.

% We discretize the latent manifold $\mathcal{H}$ into a two-tier structure:
% \begin{itemize}
%     \item \textbf{Level 1 (Expert Regions):} The manifold is partitioned into $K$ disjoint regions $\{E_k\}_{k=1}^K$, where $E_k = \{z \in \mathcal{H} : \arg\max_j \langle z, a_j \rangle = k\}$.
%     \item \textbf{Level 2 (Local Buckets):} Within each $E_k$, we further partition samples into $B$ sub-regions $\{\mathcal{B}_{k,b}\}_{b=1}^B$ using local vector quantization.
% \end{itemize}

% During inference, given a query $q = g_\theta(x_q)$, we employ a coarse-to-fine routing strategy:
% \begin{enumerate}
%     \item \textbf{Expert Routing:} Identify the set of candidate experts $\mathcal{K} \subset \{1, \dots, K\}$ such that:
%     \begin{equation}
%         \mathcal{K} = \text{Top-}K_1 \left( \{ \langle q, a_k \rangle \}_{k=1}^K \right).
%     \end{equation}
%     \item \textbf{Bucket Probing:} Within the selected experts, retrieve the candidate buckets $\mathcal{S}$ using centroids $c_{k,b}$:
%     \begin{equation}
%         \mathcal{S} = \bigcup_{k \in \mathcal{K}} \text{Top-}K_2 \left( \{ \langle q, c_{k,b} \rangle \}_{b=1}^B \right).
%     \end{equation}
%     \item \textbf{Approximate Scoring:} Compute the practical anomaly score $\hat{s}(q)$ by considering only samples $z \in \mathcal{S}$:
%     \begin{equation}
%         \hat{s}(q) = \min_{z \in \mathcal{S}} \|q - z\|_2.
%     \end{equation}
% \end{enumerate}
We discretize the latent manifold $\mathcal{H}$ into a hierarchical two-tier structure for efficient retrieval. At the coarse level, the space is partitioned into $K$ disjoint expert regions $\{E_k\}_{k=1}^K$, where each region is defined by maximum similarity to a set of anchors $\{a_k\}$, i.e., $E_k = \{z \in \mathcal{H} : \arg\max_j \langle z, a_j \rangle = k\}$. At the fine level, each expert region is further subdivided into $B$ local buckets $\{\mathcal{B}_{k,b}\}_{b=1}^B$ via vector quantization, enabling finer-grained organization within each region.

At inference time, given a query representation $q = g_\theta(x_q)$, we adopt a coarse-to-fine routing strategy to efficiently approximate nearest-neighbor search. First, we perform \emph{expert routing} by selecting the top-$K_1$ most relevant regions based on similarity scores $\langle q, a_k \rangle$. Next, within the selected experts, we conduct \emph{bucket probing} by retrieving the top-$K_2$ closest buckets according to their centroids $\{c_{k,b}\}$. Finally, we compute an approximate anomaly score by restricting the search to the candidate set $\mathcal{S}$ formed by these buckets, yielding $\hat{s}(q) = \min_{z \in \mathcal{S}} \|q - z\|_2$. This hierarchical design significantly reduces the search space while preserving retrieval accuracy.
This pruning strategy reduces the search complexity from $\mathcal{O}(N \cdot d)$ to $\mathcal{O}((K + K_1 B + \frac{K_1 K_2 N}{K B}) \cdot d)$, achieving a significant constant-factor speedup, though remaining asymptotically linear $\mathcal{O}(N)$ relative to the database size $N$. Crucially, the following quantization bound characterizes score distortion under candidate coverage. To quantify this, let $\mathcal{H}_{norm} = \{g_\theta(z) \mid z \in \mathcal{Z}\}$ denote the projected normal database. We define the ideal exhaustive-search score as the global minimum distance in this latent space: $s(q) = \min_{h \in \mathcal{H}_{norm}} \|q - h\|_2$.

\begin{theorem}[Bounded Retrieval Error]
\label{thm:robustness}
Given the ideal exhaustive-search score $s(q)$ and the practical score $\hat{s}(q)$ obtained via hierarchical retrieval, suppose the candidate set contains a point in the same radius-$\epsilon$ bucket as a true nearest neighbor. Then the retrieval error is strictly bounded:
\begin{equation}
    |\hat{s}(q) - s(q)| \le 2\epsilon.
\end{equation}
\end{theorem}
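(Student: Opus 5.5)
The plan is a sandwich argument built from the triangle inequality. Let $h^\star\in\mathcal{H}_{norm}$ be a true nearest neighbor of $q$, so $s(q)=\|q-h^\star\|_2$. By hypothesis, the candidate set $\mathcal{S}$ contains a point $\hat h$ lying in the same bucket $\mathcal{B}_{k,b}$ as $h^\star$. I interpret ``radius-$\epsilon$ bucket'' as every member lying within $\epsilon$ of the bucket centroid $c_{k,b}$. Then $\|\hat h-c_{k,b}\|_2\le\epsilon$ and $\|h^\star-c_{k,b}\|_2\le\epsilon$, so $\|\hat h-h^\star\|_2\le 2\epsilon$.

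\textbf{Lower bound.} Since $\mathcal{S}\subseteq\mathcal{H}_{norm}$, the minimum over the smaller set is at least the minimum over the larger one. Hence $\hat s(q)=\min_{z\in\mathcal{S}}\|q-z\|_2\ge s(q)$, so $\hat s(q)-s(q)\ge 0$. This needs only that candidates are drawn from the projected normal database, which holds by construction of the buckets.

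\textbf{Upper bound.} Since $\hat h\in\mathcal{S}$,
\begin{equation*}
\hat s(q)\le\|q-\hat h\|_2\le\|q-h^\star\|_2+\|h^\star-\hat h\|_2\le s(q)+2\epsilon .
\end{equation*}
Combining the two bounds gives $0\le\hat s(q)-s(q)\le 2\epsilon$, hence $|\hat s(q)-s(q)|\le 2\epsilon$. The bound is one-sided, so it could be tightened, but the stated form suffices.

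\textbf{Main obstacle.} The only delicate point is fixing what ``radius-$\epsilon$ bucket'' means. If it means the bucket's diameter is at most $\epsilon$, the argument gives the sharper bound $\epsilon$, and $2\epsilon$ still holds. If it means the distance to the centroid is at most $\epsilon$, the two-step triangle inequality through $c_{k,b}$ gives exactly $2\epsilon$. I will state the centroid-radius convention explicitly, since it matches the $2\epsilon$ constant.

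The word ``strictly'' in the statement should be read as ``rigorously'' rather than as a strict inequality, because equality can occur in degenerate configurations. The coverage hypothesis does all the work regarding routing. No property of the expert routing, the anchors, or Theorem~\ref{thm:separation} is needed.
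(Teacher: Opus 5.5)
Your proof is correct and matches the paper's argument essentially step for step. The containment $\mathcal{S}\subseteq\mathcal{H}_{norm}$ gives $\hat{s}(q)\ge s(q)$, and the two-step triangle inequality through the bucket centroid gives $\|h^\star-\hat h\|_2\le 2\epsilon$ and hence $\hat{s}(q)\le s(q)+2\epsilon$; your centroid-radius reading of ``radius-$\epsilon$ bucket'' is exactly the convention the paper's restatement adopts.
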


The bound is conditional on candidate coverage; expert selection alone does not guarantee bucket coverage. It shows that hierarchical routing is a controlled approximation to exhaustive search in the learned space: the theorem bounds pointwise score distortion, while the resulting detection accuracy under this approximation is evaluated empirically in \cref{sec:experiments}.
\section{Experiments}
\label{sec:experiments}

\makeatletter
\@ifundefined{color@takeawaybg}{\definecolor{takeawaybg}{RGB}{232,241,246}}{}
\@ifundefined{color@groupgray}  {\definecolor{groupgray}{RGB}{246,246,246}}{}
\@ifundefined{color@harmonyblue}{\definecolor{harmonyblue}{RGB}{222,244,248}}{}
\@ifundefined{color@bestgreen}{\definecolor{bestgreen}{RGB}{0,130,55}}{}
\@ifundefined{color@secondblue}{\definecolor{secondblue}{RGB}{36,91,185}}{}
\makeatother
\newlength{\takeawaywidth}
\setlength{\takeawaywidth}{0.97\columnwidth}
\newcommand{\takeawaybox}[2]{%
    \par\smallskip
    \begin{center}
    {%
        \setlength{\fboxsep}{5pt}%
        \colorbox{takeawaybg}{%
            \begin{minipage}{\dimexpr\takeawaywidth-2\fboxsep\relax}
            \small \textbf{Takeaway (#1).} #2
            \end{minipage}
        }%
    }%
    \end{center}
    \vspace{-1mm}
}
\newcommand{\best}[1]{\textbf{\textcolor{bestgreen}{#1}}}
\newcommand{\second}[1]{\underline{\textcolor{secondblue}{#1}}}

\subsection{Experimental Setup}
We evaluate HARMONY primarily on DCASE 2020 Task 2 \citep{koizumi2020description} under a metadata-free machinery protocol: normal samples from all 23 machine IDs are pooled without metadata during training. We also evaluate MIMII \citep{purohit2019mimii} and DCASE 2022 \citep{dcase2022task2}, including noise and source-to-target shifts. We utilize a frozen BYOL-A encoder \citep{niizumi2022byol} ($D=3072$ dim) on 16kHz log-Mel spectrograms. During training, HARMONY employs $K=8$ anchors and $B=64$ local buckets, optimized via Adam ($10^{-3}$ lr) for 20 epochs. For the hierarchical retrieval during inference, we set the routing hyperparameters to $K_1=3$ and $K_2=10$. Anchor count and temperature are selected using the Silhouette Score on pooled normal data, without anomaly labels or test AUC. Detailed baseline adaptations are provided in Appendix \ref{app:implementation}.

The empirical study addresses four core questions: \textbf{RQ1} confirms mixed-source entanglement as the primary bottleneck in metadata-free machinery ASD; \textbf{RQ2} compares HARMONY's active geometric disentanglement against unified baselines; \textbf{RQ3} ablates the proposed mechanism and evaluates stability under perturbations; and \textbf{RQ4} assesses whether the hierarchical index achieves efficient CPU retrieval without accuracy degradation compared to exact search.

\subsection{RQ1: Is Representation Entanglement the Main Obstacle?}
\label{sec:rq1}

We first ask whether the universal setting fails because existing detectors are too weak, or because the mixed-source geometry itself is problematic. To test this, we use HARMONY only as a frozen router: it partitions the unlabeled universal dataset into semantically consistent regions, after which standard anomaly detectors are trained per region. If these same detectors recover strongly after repartitioning, then the bottleneck lies in the mixed representation rather than in the detector family.

\begin{table}[ht]
\normalsize
\renewcommand{\arraystretch}{1.15}
\setlength{\tabcolsep}{4pt}
\caption{Unified vs. per-region training (\%). Baseline AUC gains ($\Delta$) under per-region training expose mixed-source geometry as the core bottleneck, yet partitioning severely inflates costs. HARMONY natively achieves the optimal accuracy-efficiency balance.}
\label{tab:disentanglement_verification}
\centering
\begin{tabular}{l cccc cc}
\toprule
\multirow{2}{*}{Method} & \multicolumn{2}{c}{Unified Setting} & \multicolumn{2}{c}{Per-Region Setting} & \multicolumn{2}{c}{Cost Transition (Uni $\to$ Reg)} \\
\cmidrule(lr){2-3} \cmidrule(lr){4-5} \cmidrule(lr){6-7}
& AUC & pAUC & AUC ($\Delta$) & pAUC & Train Time (h) & Latency (ms) \\
\midrule
PatchCore   & 62.10 & 56.20 & 91.80\pos{29.70} & 78.90 & 0 $\to$ 0 & 41.8 $\to$ \textbf{30.6} \\
Audio-PC    & 83.90 & 71.30 & \textbf{93.10}\pos{9.20} & \textbf{80.40} & 0 $\to$ 0 & 67.5 $\to$ 52.1 \\
EfficientAD & 66.30 & 58.70 & 88.90\pos{22.60} & 76.10 & 2.4 $\to$ \textcolor{negred}{18.5} & 6.3 $\to$ 6.3 \\
Audio-MAE   & 86.10 & 72.80 & 92.40\pos{6.30} & 79.60 & 14.2 $\to$ \textcolor{negred}{108.0} & 82.4 $\to$ 82.4 \\
MCDD        & 78.30 & 66.90 & 89.30\pos{11.00} & 76.80 & 7.1 $\to$ \textcolor{negred}{55.6} & 54.7 $\to$ 44.9 \\
\midrule
\rowcolor{harmonyblue}
HARMONY (IVF) & \textbf{89.80} & \textbf{74.70} & \multicolumn{2}{c}{\textit{Native Disentanglement}} & \textbf{0} (Unified) & \textbf{5.4} (Unified) \\
\bottomrule
\end{tabular}
\end{table}

\cref{tab:disentanglement_verification} shows that every strong baseline improves once the mixed dataset is repartitioned into more homogeneous regions. Audio-MAE gains 6.30 AUC points, Audio-PC gains 9.20 AUC points, and even simple PatchCore gains a substantial 29.70 AUC points. This is problem-level evidence that the dominant bottleneck is entangled mixed-source geometry rather than detector family alone. Furthermore, t-SNE visualizations in Appendix \ref{app:tsne_visualization} visually confirm how HARMONY successfully separates these initially entangled features into distinct Voronoi regions.

\subsection{RQ2: Does Active Repartitioning Improve Metadata-free machinery ASD Accuracy?}
\label{sec:rq2}

Having established that the problem is geometric, we next test whether HARMONY's \emph{active} repartitioning improves accuracy over strong unified baselines rather than merely reusing a stronger detector or post-hoc index.

\begin{table*}[t]
\normalsize
\renewcommand{\arraystretch}{1.15}
\setlength{\tabcolsep}{2.5pt}
\caption{Metadata-free machinery anomaly detection accuracy (AUC, \%) on DCASE 2020. Absolute sub-metric changes (\textcolor{posgreen}{+}/\textcolor{negred}{-}) and relative average improvements (\%) are computed against the PatchCore baseline. All methods share the same pre-trained features.}
\label{tab:dcase_results}
\begin{tabular}{l ccccccc}
\toprule
\multirow{2}{*}{Methods} & \multicolumn{6}{c}{Machine Types} & \multirow{2}{*}{Average} \\
\cmidrule(lr){2-7}
& ToyCar & ToyConv & Fan & Pump & Slider & Valve & \\
\midrule
PatchCore & 62.80 & 60.60 & 62.20 & 57.20 & 67.80 & 61.50 & 62.10 \\
\rowcolor{groupgray}
\multicolumn{8}{c}{\textit{Classic Adaptation}} \\
SimpleNet   & 63.20\pos{0.40} & 62.10\pos{1.50} & 56.70\negc{5.50} & 58.00\pos{0.80} & 66.30\negc{1.50} & 62.20\pos{0.70} & 61.40{\scriptsize\textcolor{negred}{(-1.1\%)}} \\
EfficientAD & 65.80\pos{3.00} & 65.60\pos{5.00} & 65.70\pos{3.50} & 68.50\pos{11.30} & 66.50\negc{1.30} & 65.60\pos{4.10} & 66.30{\scriptsize\textcolor{posgreen}{(+6.8\%)}} \\
\rowcolor{groupgray}
\multicolumn{8}{c}{\textit{Foundation Embeddings}} \\
Gen-Rep     & 82.30\pos{19.50} & 81.10\pos{20.50} & 79.40\pos{17.20} & 79.80\pos{22.60} & 81.20\pos{13.40} & 79.90\pos{18.40} & 80.60{\scriptsize\textcolor{posgreen}{(+29.8\%)}} \\
Audio-PC    & 85.40\pos{22.60} & 84.10\pos{23.50} & 82.50\pos{20.30} & 83.30\pos{26.10} & 85.10\pos{17.30} & 83.20\pos{21.70} & 83.90{\scriptsize\textcolor{posgreen}{(+35.1\%)}} \\
\rowcolor{groupgray}
\multicolumn{8}{c}{\textit{Unified Models}} \\
CADA        & 73.30\pos{10.50} & 73.20\pos{12.60} & 78.30\pos{16.10} & 72.60\pos{15.40} & 73.10\pos{5.30} & 75.90\pos{14.40} & 74.50{\scriptsize\textcolor{posgreen}{(+20.0\%)}} \\
UniFlow     & 75.20\pos{12.40} & 78.70\pos{18.10} & 75.40\pos{13.20} & 72.80\pos{15.60} & 75.30\pos{7.50} & 75.80\pos{14.30} & 75.40{\scriptsize\textcolor{posgreen}{(+21.4\%)}} \\
Dinomaly    & 74.90\pos{12.10} & 73.80\pos{13.20} & 74.70\pos{12.50} & 76.00\pos{18.80} & 74.90\pos{7.10} & 76.80\pos{15.30} & 75.30{\scriptsize\textcolor{posgreen}{(+21.3\%)}} \\
\rowcolor{groupgray}
\multicolumn{8}{c}{\textit{Disentanglement Baselines}} \\
GRHD        & 66.20\pos{3.40} & 63.90\pos{3.30} & 65.10\pos{2.90} & 63.40\pos{6.20} & 65.80\negc{2.00} & 65.30\pos{3.80} & 65.00{\scriptsize\textcolor{posgreen}{(+4.7\%)}} \\
MCDD        & 79.10\pos{16.30} & 78.30\pos{17.70} & 76.90\pos{14.70} & 77.80\pos{20.60} & 79.40\pos{11.60} & 78.20\pos{16.70} & 78.30{\scriptsize\textcolor{posgreen}{(+26.1\%)}} \\
AudDSR      & 81.20\pos{18.40} & 79.80\pos{19.20} & 78.50\pos{16.30} & 79.10\pos{21.90} & 81.30\pos{13.50} & 80.20\pos{18.70} & 80.00{\scriptsize\textcolor{posgreen}{(+28.8\%)}} \\
\rowcolor{groupgray}
\multicolumn{8}{c}{\textit{Proposed Method}} \\
\rowcolor{harmonyblue}
HARMONY     & \textbf{98.00}\pos{35.20} & \textbf{89.70}\pos{29.10} & \textbf{88.70}\pos{26.50} & \textbf{89.70}\pos{32.50} & \textbf{87.60}\pos{19.80} & \textbf{90.00}\pos{28.50} & \textbf{90.00}{\scriptsize\textcolor{posgreen}{(+44.9\%)}} \\
\bottomrule
\end{tabular}
\end{table*}

\textbf{Baselines.} We compare against four baseline families: (1) classic adaptation methods PatchCore \citep{roth2022towards}, SimpleNet \citep{liu2023simplenet}, and EfficientAD \citep{batzner2024efficientad}; (2) foundation-feature pipelines Gen-Rep \citep{saengthong2025deep} and Audio-PC; (3) unified models CADA \citep{guo2024absolute}, UniFlow \citep{zhong2024uniflow}, and Dinomaly \citep{guo2025dinomaly}; and (4) disentanglement-oriented baselines GRHD \citep{guan2025disentangling}, MCDD \citep{lee2020multi}, and AudDSR \citep{zavrtanik2024anomalous}.

\cref{tab:dcase_results} shows HARMONY reaches 90.00\% mean AUC, improving by 14.60 percentage points over the strongest unified baseline UniFlow (75.40\%) and by 6.10 percentage points over the foundation-feature baseline Audio-PC (83.90\%). On the average AUC column, this corresponds to a +7.3\% relative gain over the second-best method. Direct controls in Appendix \ref{app:direct_control} separate representation learning from indexing: vMF soft prototypes reach 85.71\% routed AUC and HARMONY reaches 89.80\%; applying HNSW to the HARMONY representation gives 89.71\%. Thus, the main accuracy gain comes from the representation, while the shared index improves routing recall. Detailed multi-seed stability results, including variance (mean $\pm$ std over 5 seeds) and pAUC metrics, are provided in Appendix \ref{app:multi_seed}.

\paragraph{Cross-dataset comparison.}
We compare the strongest feature baseline and two unified models on MIMII and DCASE 2022 in \cref{tab:cross_dataset}. The latter introduces section-level source-to-target shifts. HARMONY improves over Audio-PC by 4.84 AUC points on MIMII and 4.52 points overall on DCASE 2022, with a 4.27-point gain on the target domain. Full MIMII baseline comparisons and worst-group metrics appear in Appendix \ref{app:cross_dataset}.
\begin{table}[htbp]
\centering
\small
\setlength{\tabcolsep}{4pt}
\caption{Cross-dataset AUC (\%, mean $\pm$ standard deviation over five seeds). MIMII uses the $-6$ dB setting. DCASE 2022 is evaluated with machine metadata withheld from the model.}
\label{tab:cross_dataset}
\begin{tabular}{lccc}
\toprule
Method & MIMII & DCASE 2022 & Target \\
\midrule
Audio-PC & $79.73\pm0.44$ & $77.84\pm0.46$ & $73.72\pm0.59$ \\
UniFlow & $71.24\pm0.79$ & $72.69\pm0.81$ & $68.43\pm0.98$ \\
Dinomaly & $72.06\pm0.71$ & $73.21\pm0.73$ & $68.97\pm0.91$ \\
HARMONY & $84.57\pm0.58$ & $82.36\pm0.52$ & $77.99\pm0.68$ \\
\bottomrule
\end{tabular}
\end{table}

\subsection{RQ3: Are the Gains Attributable and Robust?}
\label{sec:rq3}

We next test whether the gains come from the proposed mechanism itself and whether they remain stable when the setting changes.

\begin{table}[ht]
\normalsize
\renewcommand{\arraystretch}{1.15}
\setlength{\tabcolsep}{4pt}
\centering
\caption{Component ablation study on DCASE 2020 (\%). The full flat-search model serves as the baseline. Performance drops ($\Delta$) are highlighted in \textcolor{negred}{red}. The final deployed HARMONY (Hierarchical) maintains near-lossless accuracy.}
\label{tab:ablation}
\begin{tabular}{l ccccccc c}
\toprule
\multirow{2}{*}{Variant} & \multicolumn{6}{c}{Machine Types} & \multirow{2}{*}{Average} & \multirow{2}{*}{$\Delta$} \\
\cmidrule(lr){2-7}
& ToyCar & ToyConv & Fan & Pump & Slider & Valve & & \\
\midrule
\rowcolor{harmonyblue}
Full Model     & \textbf{98.00} & \textbf{89.70} & \textbf{88.70} & \textbf{89.70} & \textbf{87.60} & \textbf{90.00} & \textbf{90.00} & - \\
\rowcolor{groupgray}
\multicolumn{9}{c}{\textit{Ablated Components}} \\
w/o Soft Assignment & 96.50 & 87.00 & 85.00 & 87.50 & 85.20 & 84.40 & 87.60 & \textcolor{negred}{-2.40} \\
w/o Spectral Init    & 89.50 & 76.50 & 78.20 & 82.50 & 79.00 & 83.10 & 81.40 & \textcolor{negred}{-8.60} \\
w/o Projection      & 64.10 & 56.50 & 58.80 & 69.20 & 62.30 & 63.50 & 62.40 & \textcolor{negred}{-27.60} \\
\midrule
\rowcolor{harmonyblue}
HARMONY   & 97.80 & 89.50 & 88.50 & 89.50 & 87.40 & 89.80 & 89.80 & \textcolor{negred}{-0.20} \\
\bottomrule
\end{tabular}
\end{table}

\cref{tab:ablation} attributes the performance gain to the proposed geometry. Removing the projection collapses the mean AUC from 90.00\% to 62.40\%, while replacing spectral initialization or soft assignment costs 8.60 and 2.40 AUC points respectively. The 0.20-point gap between flat and hierarchical retrieval further suggests that the indexing approximation is not the source of the accuracy gain.

\begin{figure}[t]
    \centering
    \begin{subfigure}[t]{0.48\columnwidth}
        \centering
        \includegraphics[width=\linewidth]{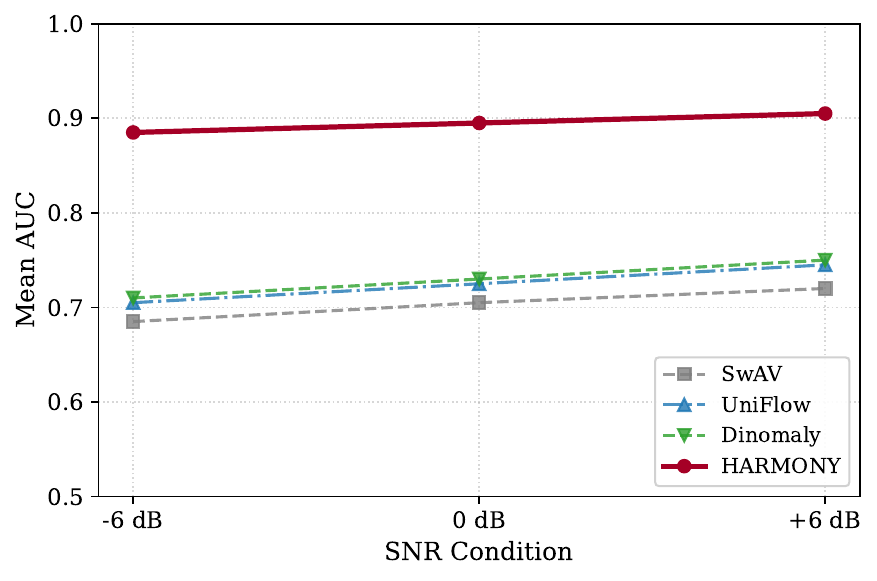}
        \caption{Robustness under dynamic SNR.}
        \label{fig:snr_robustness}
    \end{subfigure}
    \hfill
    \begin{subfigure}[t]{0.48\columnwidth}
        \centering
        \includegraphics[width=\linewidth]{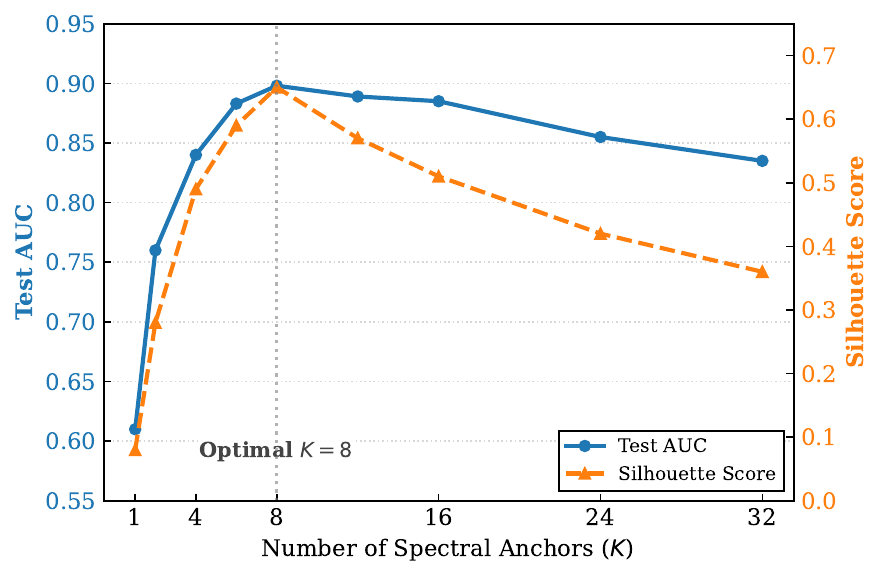}
        \caption{Unsupervised model selection.}
        \label{fig:k_selection}
    \end{subfigure}
    \caption{Robustness and stability. Left: HARMONY remains stable as SNR drops to $-6$ dB, unlike degrading baselines. Right: Silhouette Score tracks AUC, enabling unsupervised model selection.}
    \label{fig:robustness_and_selection}
\end{figure}

\cref{fig:robustness_and_selection} addresses two further concerns. Under dynamic MIMII noise conditions, HARMONY shows a flatter performance curve than baseline methods, and the Silhouette Score tracks downstream AUC well enough to support metadata-free model selection. Additional backbone-transfer results, stress tests, and sensitivity analyses appear in Appendices \ref{app:verification}, \ref{app:stress_tests}, and \ref{app:hyperparameters}.

\paragraph{Geometry of the learned regions.}
\cref{tab:qe_margin} measures average cosine-distance proxies for quantization error and assignment margin. Projection lowers the former from 0.842 to 0.181 and raises the latter from 0.418 to 0.852. The accompanying Silhouette increase supports the interpretation that the accuracy gain coincides with more compact, better-separated normal regions. These average proxies describe observed geometry; they do not substitute for the uniform Euclidean residual assumptions in Theorem~\ref{thm:separation}.

\begin{table}[ht]
\normalsize
\renewcommand{\arraystretch}{1.15}
\setlength{\tabcolsep}{8pt}
\caption{Empirical validation of geometric metrics (measured via aligned cosine distance).}
\label{tab:qe_margin}
\centering
\begin{tabular}{l ccccc}
\toprule
\textbf{Feature Space} & \textbf{QE ($\downarrow$)} & \textbf{Margin ($\uparrow$)} & \textbf{Ratio ($\uparrow$)} & \textbf{Silh. Score ($\uparrow$)} & \textbf{AUC ($\uparrow$, \%)} \\
\midrule
Raw BYOL-A & 0.842 & 0.418 & 0.496 & 0.331 & 62.40 \\
\rowcolor{harmonyblue}
HARMONY   & \textbf{0.181} & \textbf{0.852} & \textbf{4.707} & \textbf{0.824} & \textbf{90.00} \\
\bottomrule
\end{tabular}
\end{table}

\paragraph{Selection across mixture complexity.}
\cref{tab:selection_main} summarizes normal-only selection as two to six machine types are pooled. The selected count changes from 4 to 8 and remains within 0.19--0.37 AUC points of the retrospective oracle. In the separate long-tailed mixtures, the selected configuration changes from $K=8,\tau=0.10$ on balanced data to $K=16,\tau=0.05$ at imbalance ratio 20, with 84.83\% macro AUC and 79.36\% tail-type AUC. Appendix \ref{app:hyperparameters} reports the selection, temperature, and rare-type analyses.
\begin{table}[htbp]
\centering
\small
\setlength{\tabcolsep}{7pt}
\caption{Normal-only anchor selection across mixtures of $C$ machine types. Counts show mode / range over five seeds; AUC entries are five-seed means (\%). The test oracle is used only for retrospective comparison.}
\label{tab:selection_main}
\begin{tabular}{lccc}
\toprule
Groups & Selected $K$ & Selected AUC & Oracle gap \\
\midrule
2 types & 4 / 2--6 & 92.84 & 0.27 \\
3 types & 6 / 4--8 & 91.97 & 0.32 \\
4 types & 8 / 6--12 & 91.31 & 0.27 \\
5 types & 8 / 8--12 & 90.64 & 0.37 \\
6 types & 8 / 8--12 & 90.07 & 0.19 \\
\bottomrule
\end{tabular}
\end{table}

\paragraph{Recorded noise and contaminated training.}
The additional tests probe distinct departures from the standard setting. With DEMAND background noise at $-6$ dB, HARMONY reaches 79.42\% AUC against 66.11\% for Audio-PC; under within-clip varying SNR, the corresponding AUCs are 83.12\% and 71.84\%. In the separate contamination protocol, every potential training contaminant is removed from the test set in all conditions. HARMONY decreases from 89.65\% AUC at 0\% contamination to 83.31\% at 5\%. Full results and protocols are in Appendices \ref{app:stress_tests} and \ref{app:contamination}; contamination still reduces performance, so the clean-training assumption remains relevant.

\paragraph{Acoustic interpretation.}
Post-hoc machine-type alignment complements the geometric evidence. On DCASE 2020, HARMONY's anchor assignments achieve NMI 0.78 and purity 0.89, compared with 0.42 and 0.63 for raw-feature KMeans. On MIMII, the corresponding values are 0.72 and 0.85. These labels are used only for analysis, not training or selection. Appendix \ref{app:tsne_visualization} provides the visualization and complete alignment measures. The anchors therefore capture structure related to machine acoustics while allowing multiple regions per machine type.

\subsection{RQ4: Does the Learned Partition Also Improve Retrieval Efficiency?}
\label{sec:rq4}

Finally, we ask whether the same learned partition that improves accuracy can also make inference cheaper. To isolate this question cleanly, we compare HARMONY's exhaustive retrieval and hierarchical retrieval on the same CPU and with the same backbone.

\begin{table}[ht]
\normalsize
\renewcommand{\arraystretch}{1.15}
\setlength{\tabcolsep}{12pt} 
\centering
\caption{Efficiency versus accuracy trade-off on CPU. Hierarchical retrieval (IVF) offers substantial latency reduction (\textcolor{posgreen}{green}) with negligible accuracy drop (\textcolor{negred}{red}) compared to exhaustive search (Flat).}
\label{tab:efficiency}
\begin{tabular}{l cccc}
\toprule
\multirow{2}{*}{Retrieval Method} & \multicolumn{2}{c}{Accuracy (\%)} & \multicolumn{2}{c}{Efficiency (CPU)} \\
\cmidrule(lr){2-3} \cmidrule(lr){4-5}
& AUC ($\uparrow$) & pAUC ($\uparrow$) & Latency ($\downarrow$) & Speedup ($\uparrow$) \\
\midrule
HARMONY (Flat) & \textbf{90.00} & \textbf{74.90} & 42.3 ms & 1.0$\times$ \\
\rowcolor{harmonyblue}
HARMONY (IVF)  & 89.80\negc{0.20} & 74.70\negc{0.20} & \textbf{5.4 ms} {\scriptsize\textcolor{posgreen}{(-87.2\%)}} & \textbf{7.8$\times$} \\
\bottomrule
\end{tabular}
\end{table}

\cref{tab:efficiency} shows that hierarchical retrieval reduces latency from 42.3 ms to 5.4 ms on the same CPU, a 7.8$\times$ speedup with a 0.20-point AUC drop. The comparison uses the same learned representation, isolating the effect of hierarchical candidate pruning.

\section{Conclusion}
\label{sec:conclusion}
HARMONY addresses metadata-free machinery ASD by reorganizing mixed normal features into anchor-defined regions and reusing the same geometry for hierarchical retrieval. On DCASE 2020, flat retrieval reaches 90.00\% mean AUC; hierarchical retrieval reduces CPU latency from 42.3 to 5.4 ms with a 0.20-point AUC reduction. Comparisons on MIMII and DCASE 2022 further support the approach across industrial acoustic conditions.

\textbf{Limitations}\quad Performance depends on the initial pre-trained features, and a global temperature can be suboptimal for strongly imbalanced modes. The separation and retrieval bounds require bounded residuals and candidate coverage, respectively. Density-aware partitioning and validation beyond industrial machinery remain future work. Appendix \ref{app:limitations_impacts} discusses broader impacts.

\clearpage

\begin{ack}
The authors received no third-party funding or support for this work. The authors declare no competing interests.
\end{ack}

\bibliography{references}
\bibliographystyle{plainnat}

\newpage
\section*{NeurIPS Paper Checklist}
\begin{enumerate}

\item {\bf Claims}
    \item[] Question: Do the main claims made in the abstract and introduction accurately reflect the paper's contributions and scope?
    \item[] Answer: \answerYes{} 
    \item[] Justification: The main claims regarding accuracy improvements and retrieval efficiency speedups are clearly stated in the Abstract and \cref{sec:intro}, and are empirically validated in \cref{sec:experiments}.
    \item[] Guidelines:
    \begin{itemize}
        \item The answer \answerNA{} means that the abstract and introduction do not include the claims made in the paper.
        \item The abstract and/or introduction should clearly state the claims made, including the contributions made in the paper and important assumptions and limitations. A \answerNo{} or \answerNA{} answer to this question will not be perceived well by the reviewers. 
        \item The claims made should match theoretical and experimental results, and reflect how much the results can be expected to generalize to other settings. 
        \item It is fine to include aspirational goals as motivation as long as it is clear that these goals are not attained by the paper. 
    \end{itemize}

\item {\bf Limitations}
    \item[] Question: Does the paper discuss the limitations of the work performed by the authors?
    \item[] Answer: \answerYes{} 
    \item[] Justification: We explicitly discuss the limitations of our method, including vulnerabilities to extreme domain shifts and the rigidity of uniform temperature scaling, in \cref{app:limitations_impacts}.
    \item[] Guidelines:
    \begin{itemize}
        \item The answer \answerNA{} means that the paper has no limitation while the answer \answerNo{} means that the paper has limitations, but those are not discussed in the paper. 
        \item The authors are encouraged to create a separate ``Limitations'' section in their paper.
        \item The paper should point out any strong assumptions and how robust the results are to violations of these assumptions (e.g., independence assumptions, noiseless settings, model well-specification, asymptotic approximations only holding locally). The authors should reflect on how these assumptions might be violated in practice and what the implications would be.
        \item The authors should reflect on the scope of the claims made, e.g., if the approach was only tested on a few datasets or with a few runs. In general, empirical results often depend on implicit assumptions, which should be articulated.
        \item The authors should reflect on the factors that influence the performance of the approach. For example, a facial recognition algorithm may perform poorly when image resolution is low or images are taken in low lighting. Or a speech-to-text system might not be used reliably to provide closed captions for online lectures because it fails to handle technical jargon.
        \item The authors should discuss the computational efficiency of the proposed algorithms and how they scale with dataset size.
        \item If applicable, the authors should discuss possible limitations of their approach to address problems of privacy and fairness.
        \item While the authors might fear that complete honesty about limitations might be used by reviewers as grounds for rejection, a worse outcome might be that reviewers discover limitations that aren't acknowledged in the paper. The authors should use their best judgment and recognize that individual actions in favor of transparency play an important role in developing norms that preserve the integrity of the community. Reviewers will be specifically instructed to not penalize honesty concerning limitations.
    \end{itemize}

\item {\bf Theory assumptions and proofs}
    \item[] Question: For each theoretical result, does the paper provide the full set of assumptions and a complete (and correct) proof?
    \item[] Answer: \answerYes{} 
    \item[] Justification: The geometric results state their residual and candidate-coverage conditions in Section \ref{sec:method}; their proofs and the distinct optimization assumptions are given in Appendix \ref{app:theory}.
    \item[] Guidelines:
    \begin{itemize}
        \item The answer \answerNA{} means that the paper does not include theoretical results. 
        \item All the theorems, formulas, and proofs in the paper should be numbered and cross-referenced.
        \item All assumptions should be clearly stated or referenced in the statement of any theorems.
        \item The proofs can either appear in the main paper or the supplemental material, but if they appear in the supplemental material, the authors are encouraged to provide a short proof sketch to provide intuition. 
        \item Inversely, any informal proof provided in the core of the paper should be complemented by formal proofs provided in appendix or supplemental material.
        \item Theorems and Lemmas that the proof relies upon should be properly referenced. 
    \end{itemize}

    \item {\bf Experimental result reproducibility}
    \item[] Question: Does the paper fully disclose all the information needed to reproduce the main experimental results of the paper to the extent that it affects the main claims and/or conclusions of the paper (regardless of whether the code and data are provided or not)?
    \item[] Answer: \answerYes{} 
    \item[] Justification: We provide detailed experimental setups, network architectures, and baseline adaptation details in \cref{sec:experiments} and \cref{app:implementation}, with the public code repository linked in the Abstract.
    \item[] Guidelines:
    \begin{itemize}
        \item The answer \answerNA{} means that the paper does not include experiments.
        \item If the paper includes experiments, a \answerNo{} answer to this question will not be perceived well by the reviewers: Making the paper reproducible is important, regardless of whether the code and data are provided or not.
        \item If the contribution is a dataset and\slash or model, the authors should describe the steps taken to make their results reproducible or verifiable. 
        \item Depending on the contribution, reproducibility can be accomplished in various ways. For example, if the contribution is a novel architecture, describing the architecture fully might suffice, or if the contribution is a specific model and empirical evaluation, it may be necessary to either make it possible for others to replicate the model with the same dataset, or provide access to the model. In general. releasing code and data is often one good way to accomplish this, but reproducibility can also be provided via detailed instructions for how to replicate the results, access to a hosted model (e.g., in the case of a large language model), releasing of a model checkpoint, or other means that are appropriate to the research performed.
        \item While NeurIPS does not require releasing code, the conference does require all submissions to provide some reasonable avenue for reproducibility, which may depend on the nature of the contribution. For example
        \begin{enumerate}
            \item If the contribution is primarily a new algorithm, the paper should make it clear how to reproduce that algorithm.
            \item If the contribution is primarily a new model architecture, the paper should describe the architecture clearly and fully.
            \item If the contribution is a new model (e.g., a large language model), then there should either be a way to access this model for reproducing the results or a way to reproduce the model (e.g., with an open-source dataset or instructions for how to construct the dataset).
            \item We recognize that reproducibility may be tricky in some cases, in which case authors are welcome to describe the particular way they provide for reproducibility. In the case of closed-source models, it may be that access to the model is limited in some way (e.g., to registered users), but it should be possible for other researchers to have some path to reproducing or verifying the results.
        \end{enumerate}
    \end{itemize}

\item {\bf Open access to data and code}
    \item[] Question: Does the paper provide open access to the data and code, with sufficient instructions to faithfully reproduce the main experimental results, as described in supplemental material?
    \item[] Answer: \answerYes{} 
    \item[] Justification: The public code repository is linked in the Abstract, and dataset and implementation details are given in Appendices \ref{app:datasets} and \ref{app:implementation}.
    \item[] Guidelines:
    \begin{itemize}
        \item The answer \answerNA{} means that paper does not include experiments requiring code.
        \item Please see the NeurIPS code and data submission guidelines (\url{https://neurips.cc/public/guides/CodeSubmissionPolicy}) for more details.
        \item While we encourage the release of code and data, we understand that this might not be possible, so \answerNo{} is an acceptable answer. Papers cannot be rejected simply for not including code, unless this is central to the contribution (e.g., for a new open-source benchmark).
        \item The instructions should contain the exact command and environment needed to run to reproduce the results. See the NeurIPS code and data submission guidelines (\url{https://neurips.cc/public/guides/CodeSubmissionPolicy}) for more details.
        \item The authors should provide instructions on data access and preparation, including how to access the raw data, preprocessed data, intermediate data, and generated data, etc.
        \item The authors should provide scripts to reproduce all experimental results for the new proposed method and baselines. If only a subset of experiments are reproducible, they should state which ones are omitted from the script and why.
        \item At submission time, to preserve anonymity, the authors should release anonymized versions (if applicable).
        \item Providing as much information as possible in supplemental material (appended to the paper) is recommended, but including URLs to data and code is permitted.
    \end{itemize}

\item {\bf Experimental setting/details}
    \item[] Question: Does the paper specify all the training and test details (e.g., data splits, hyperparameters, how they were chosen, type of optimizer) necessary to understand the results?
    \item[] Answer: \answerYes{} 
    \item[] Justification: Comprehensive training configurations, data splits, and hyperparameters are concisely presented in \cref{sec:experiments} and comprehensively expanded in \cref{app:implementation}.
    \item[] Guidelines:
    \begin{itemize}
        \item The answer \answerNA{} means that the paper does not include experiments.
        \item The experimental setting should be presented in the core of the paper to a level of detail that is necessary to appreciate the results and make sense of them.
        \item The full details can be provided either with the code, in appendix, or as supplemental material.
    \end{itemize}

\item {\bf Experiment statistical significance}
    \item[] Question: Does the paper report error bars suitably and correctly defined or other appropriate information about the statistical significance of the experiments?
    \item[] Answer: \answerYes{} 
    \item[] Justification: We report the mean and standard deviation over 5 random seeds to demonstrate statistical significance and model stability, which is fully detailed in \cref{app:multi_seed}.
    \item[] Guidelines:
    \begin{itemize}
        \item The answer \answerNA{} means that the paper does not include experiments.
        \item The authors should answer \answerYes{} if the results are accompanied by error bars, confidence intervals, or statistical significance tests, at least for the experiments that support the main claims of the paper.
        \item The factors of variability that the error bars are capturing should be clearly stated (for example, train/test split, initialization, random drawing of some parameter, or overall run with given experimental conditions).
        \item The method for calculating the error bars should be explained (closed form formula, call to a library function, bootstrap, etc.)
        \item The assumptions made should be given (e.g., Normally distributed errors).
        \item It should be clear whether the error bar is the standard deviation or the standard error of the mean.
        \item It is OK to report 1-sigma error bars, but one should state it. The authors should preferably report a 2-sigma error bar than state that they have a 96\% CI, if the hypothesis of Normality of errors is not verified.
        \item For asymmetric distributions, the authors should be careful not to show in tables or figures symmetric error bars that would yield results that are out of range (e.g., negative error rates).
        \item If error bars are reported in tables or plots, the authors should explain in the text how they were calculated and reference the corresponding figures or tables in the text.
    \end{itemize}

\item {\bf Experiments compute resources}
    \item[] Question: For each experiment, does the paper provide sufficient information on the computer resources (type of compute workers, memory, time of execution) needed to reproduce the experiments?
    \item[] Answer: \answerYes{} 
    \item[] Justification: The exact hardware used for training and inference latency measurements is explicitly stated in \cref{app:implementation}.
    \item[] Guidelines:
    \begin{itemize}
        \item The answer \answerNA{} means that the paper does not include experiments.
        \item The paper should indicate the type of compute workers CPU or GPU, internal cluster, or cloud provider, including relevant memory and storage.
        \item The paper should provide the amount of compute required for each of the individual experimental runs as well as estimate the total compute. 
        \item The paper should disclose whether the full research project required more compute than the experiments reported in the paper (e.g., preliminary or failed experiments that didn't make it into the paper). 
    \end{itemize}
    
\item {\bf Code of ethics}
    \item[] Question: Does the research conducted in the paper conform, in every respect, with the NeurIPS Code of Ethics \url{https://neurips.cc/public/EthicsGuidelines}?
    \item[] Answer: \answerYes{} 
    \item[] Justification: The research strictly conforms to the NeurIPS Code of Ethics, utilizing standard, publicly available acoustic anomaly detection datasets without privacy violations.
    \item[] Guidelines:
    \begin{itemize}
        \item The answer \answerNA{} means that the authors have not reviewed the NeurIPS Code of Ethics.
        \item If the authors answer \answerNo, they should explain the special circumstances that require a deviation from the Code of Ethics.
        \item The authors should make sure to preserve anonymity (e.g., if there is a special consideration due to laws or regulations in their jurisdiction).
    \end{itemize}

\item {\bf Broader impacts}
    \item[] Question: Does the paper discuss both potential positive societal impacts and negative societal impacts of the work performed?
    \item[] Answer: \answerYes{} 
    \item[] Justification: We discuss both the positive impacts (enhancing industrial safety) and potential negative impacts (privacy risks from acoustic monitoring) in \cref{app:limitations_impacts}.
    \item[] Guidelines:
    \begin{itemize}
        \item The answer \answerNA{} means that there is no societal impact of the work performed.
        \item If the authors answer \answerNA{} or \answerNo, they should explain why their work has no societal impact or why the paper does not address societal impact.
        \item Examples of negative societal impacts include potential malicious or unintended uses (e.g., disinformation, generating fake profiles, surveillance), fairness considerations (e.g., deployment of technologies that could make decisions that unfairly impact specific groups), privacy considerations, and security considerations.
        \item The conference expects that many papers will be foundational research and not tied to particular applications, let alone deployments. However, if there is a direct path to any negative applications, the authors should point it out. For example, it is legitimate to point out that an improvement in the quality of generative models could be used to generate Deepfakes for disinformation. On the other hand, it is not needed to point out that a generic algorithm for optimizing neural networks could enable people to train models that generate Deepfakes faster.
        \item The authors should consider possible harms that could arise when the technology is being used as intended and functioning correctly, harms that could arise when the technology is being used as intended but gives incorrect results, and harms following from (intentional or unintentional) misuse of the technology.
        \item If there are negative societal impacts, the authors could also discuss possible mitigation strategies (e.g., gated release of models, providing defenses in addition to attacks, mechanisms for monitoring misuse, mechanisms to monitor how a system learns from feedback over time, improving the efficiency and accessibility of ML).
    \end{itemize}
    
\item {\bf Safeguards}
    \item[] Question: Does the paper describe safeguards that have been put in place for responsible release of data or models that have a high risk for misuse (e.g., pre-trained language models, image generators, or scraped datasets)?
    \item[] Answer: \answerNA{} 
    \item[] Justification: Our work addresses acoustic anomaly detection for machinery and does not release high-risk generative models or personally identifiable scraped datasets.
    \item[] Guidelines:
    \begin{itemize}
        \item The answer \answerNA{} means that the paper poses no such risks.
        \item Released models that have a high risk for misuse or dual-use should be released with necessary safeguards to allow for controlled use of the model, for example by requiring that users adhere to usage guidelines or restrictions to access the model or implementing safety filters. 
        \item Datasets that have been scraped from the Internet could pose safety risks. The authors should describe how they avoided releasing unsafe images.
        \item We recognize that providing effective safeguards is challenging, and many papers do not require this, but we encourage authors to take this into account and make a best faith effort.
    \end{itemize}

\item {\bf Licenses for existing assets}
    \item[] Question: Are the creators or original owners of assets (e.g., code, data, models), used in the paper, properly credited and are the license and terms of use explicitly mentioned and properly respected?
    \item[] Answer: \answerYes{} 
    \item[] Justification: All existing assets, including DCASE 2020, MIMII, DCASE 2022, DEMAND, and the BYOL-A backbone, are properly cited and publicly available for research purposes.
    \item[] Guidelines:
    \begin{itemize}
        \item The answer \answerNA{} means that the paper does not use existing assets.
        \item The authors should cite the original paper that produced the code package or dataset.
        \item The authors should state which version of the asset is used and, if possible, include a URL.
        \item The name of the license (e.g., CC-BY 4.0) should be included for each asset.
        \item For scraped data from a particular source (e.g., website), the copyright and terms of service of that source should be provided.
        \item If assets are released, the license, copyright information, and terms of use in the package should be provided. For popular datasets, \url{paperswithcode.com/datasets} has curated licenses for some datasets. Their licensing guide can help determine the license of a dataset.
        \item For existing datasets that are re-packaged, both the original license and the license of the derived asset (if it has changed) should be provided.
        \item If this information is not available online, the authors are encouraged to reach out to the asset's creators.
    \end{itemize}

\item {\bf New assets}
    \item[] Question: Are new assets introduced in the paper well documented and is the documentation provided alongside the assets?
    \item[] Answer: \answerNA{} 
    \item[] Justification: We do not introduce new datasets or foundational models; our proposed algorithmic methodology and corresponding code are documented in the provided repository.
    \item[] Guidelines:
    \begin{itemize}
        \item The answer \answerNA{} means that the paper does not release new assets.
        \item Researchers should communicate the details of the dataset\slash code\slash model as part of their submissions via structured templates. This includes details about training, license, limitations, etc. 
        \item The paper should discuss whether and how consent was obtained from people whose asset is used.
        \item At submission time, remember to anonymize your assets (if applicable). You can either create an anonymized URL or include an anonymized zip file.
    \end{itemize}

\item {\bf Crowdsourcing and research with human subjects}
    \item[] Question: For crowdsourcing experiments and research with human subjects, does the paper include the full text of instructions given to participants and screenshots, if applicable, as well as details about compensation (if any)? 
    \item[] Answer: \answerNA{} 
    \item[] Justification: This research solely utilizes existing industrial machine sound datasets and does not involve any human subjects or crowdsourcing.
    \item[] Guidelines:
    \begin{itemize}
        \item The answer \answerNA{} means that the paper does not involve crowdsourcing nor research with human subjects.
        \item Including this information in the supplemental material is fine, but if the main contribution of the paper involves human subjects, then as much detail as possible should be included in the main paper. 
        \item According to the NeurIPS Code of Ethics, workers involved in data collection, curation, or other labor should be paid at least the minimum wage in the country of the data collector. 
    \end{itemize}

\item {\bf Institutional review board (IRB) approvals or equivalent for research with human subjects}
    \item[] Question: Does the paper describe potential risks incurred by study participants, whether such risks were disclosed to the subjects, and whether Institutional Review Board (IRB) approvals (or an equivalent approval/review based on the requirements of your country or institution) were obtained?
    \item[] Answer: \answerNA{} 
    \item[] Justification: The research does not involve human subjects, making IRB approval inapplicable.
    \item[] Guidelines:
    \begin{itemize}
        \item The answer \answerNA{} means that the paper does not involve crowdsourcing nor research with human subjects.
        \item Depending on the country in which research is conducted, IRB approval (or equivalent) may be required for any human subjects research. If you obtained IRB approval, you should clearly state this in the paper. 
        \item We recognize that the procedures for this may vary significantly between institutions and locations, and we expect authors to adhere to the NeurIPS Code of Ethics and the guidelines for their institution. 
        \item For initial submissions, do not include any information that would break anonymity (if applicable), such as the institution conducting the review.
    \end{itemize}

\item {\bf Declaration of LLM usage}
    \item[] Question: Does the paper describe the usage of LLMs if it is an important, original, or non-standard component of the core methods in this research? Note that if the LLM is used only for writing, editing, or formatting purposes and does \emph{not} impact the core methodology, scientific rigor, or originality of the research, declaration is not required.
    \item[] Answer: \answerNA{} 
    \item[] Justification: Large Language Models (LLMs) were not used as a component of the core methods in this research.
    \item[] Guidelines:
    \begin{itemize}
        \item The answer \answerNA{} means that the core method development in this research does not involve LLMs as any important, original, or non-standard components.
        \item Please refer to our LLM policy in the NeurIPS handbook for what should or should not be described.
    \end{itemize}

\end{enumerate}

\clearpage
\appendix
\onecolumn

\FloatBarrier
\section{Comprehensive Notation Table}
\label{app:notation}

To facilitate the review of our theoretical framework and empirical evaluation, we summarize the key mathematical notations used throughout this paper in \cref{tab:notations}.

\begin{table}[ht]
\normalsize
\renewcommand{\arraystretch}{1.15}
\setlength{\tabcolsep}{8pt}
\caption{Summary of key mathematical notations.}
\label{tab:notations}
\centering
\begin{tabular}{@{}ll@{}}
\toprule
\textbf{Symbol} & \textbf{Definition} \\
\midrule
$\phi : \mathcal{X} \rightarrow \mathcal{Z}$ & Pre-trained frozen encoder mapping audio to feature space. \\
$L$ / $K$ & Latent normal-mode count / learned anchor count (not necessarily equal). \\
$D$ & Ambient dimension of the full input feature space $\mathcal{Z}$. \\
$\phi_{\#} \mathcal{P}$ & Pushforward measure of probability distribution $\mathcal{P}$ under mapping $\phi$. \\
$g_\theta : \mathcal{Z} \rightarrow \mathcal{H}$ & Trainable projection network mapping features to hypersphere $\mathbb{S}^{d-1}$. \\
$d'$ / $\eta$ & Intrinsic dimensionality / Cumulative energy threshold for spectral filtering. \\
$\mathcal{A} = \{a_k\}_{k=1}^K$ & Set of $K$ anchor points (topological skeleton) on the hypersphere. \\
$p(k | z_i; \theta)$ & Boltzmann probability of assigning sample $z_i$ to anchor $a_k$. \\
$\tau$ & Temperature controlling the sharpness of Boltzmann soft assignment. \\
$y^*_i$ & Pseudo-label (anchor index) with the highest assignment probability. \\
$\mathcal{L}_{align}(\theta)$ & Alignment loss: negative log-likelihood of pseudo-labels. \\
$\xi$ & Bounded residual alignment loss at convergence. \\
$\Delta_{\min}$ & Minimum Euclidean distance between distinct anchors. \\
$\epsilon$ & Quantization radius; max distance of local buckets for hierarchical retrieval. \\
$\mathcal{H}_{norm}$ & The projected normal database in the latent space. \\
$s(q)$ / $\hat{s}(q)$ & Ideal exhaustive-search score / Practical hierarchical retrieval score. \\
\bottomrule
\end{tabular}
\end{table}
\FloatBarrier
\section{Theoretical Proofs and Analysis}
\label{app:theory}

This section provides the rigorous mathematical derivations for the geometric properties and retrieval bounds presented in \cref{sec:method}. We further detail the asymptotic complexity analysis to substantiate the efficiency gains reported in \cref{sec:experiments}.

\subsection{Derivation of the gradient decomposition}
\label{deriva_lemma4.1}
In the anchor-guided manifold disentanglement stage, we optimize the alignment loss $\mathcal{L}_{\text{align}}(\theta) = \frac{1}{N} \sum_{i=1}^{N} l_i(z_i; \theta)$. For a given sample $z_i$, let $u_i = g_\theta(z_i) \in \mathbb{S}^{d-1}$ denote the projected feature on the hypersphere. The alignment loss for a single sample is defined as the negative log-likelihood of the pseudo-labeled anchor $y_i^* = \arg \max_k p(k|z_i; \theta)$:
\begin{equation}
l_i = -\log p(y_i^*|z_i; \theta) = -\log \left( \frac{\exp(\langle u_i, a_{y_i^*} \rangle / \tau)}{\sum_{j=1}^{K} \exp(\langle u_i, a_j \rangle / \tau)} \right),
\end{equation}
where $\tau$ is the temperature scaling parameter. To understand the geometric forces acting on the manifold, we compute the gradient of $l_i$ with respect to the projected feature $u_i$. Let $f_k = \langle u_i, a_k \rangle / \tau$ denote the scaled logit for the $k$-th anchor. By applying the chain rule, the gradient is expressed as:
\begin{equation}
\nabla_{u_i} l_i = \sum_{k=1}^{K} \frac{\partial l_i}{\partial f_k} \frac{\partial f_k}{\partial u_i}.
\end{equation}

First, to compute the partial derivative $\frac{\partial l_i}{\partial f_k}$, we consider two distinct cases for the index $k$:

\paragraph{Case A: $k = y_i^*$ (Target Anchor)}
When $k$ corresponds to the assigned pseudo-label $y_i^*$, the derivative is calculated as:
\begin{align}
\frac{\partial l_i}{\partial f_{y_i^*}} &= \frac{\partial}{\partial f_{y_i^*}} \left[ -f_{y_i^*} + \log \left( \sum_{j=1}^K \exp(f_j) \right) \right] \nonumber \\
&= -1 + \frac{1}{\sum_{j=1}^K \exp(f_j)} \cdot \frac{\partial}{\partial f_{y_i^*}} \left( \sum_{j=1}^K \exp(f_j) \right) \nonumber \\
&= -1 + \frac{\exp(f_{y_i^*})}{\sum_{j=1}^K \exp(f_j)} \nonumber \\
&= p(y_i^*|z_i) - 1.
\end{align}

\paragraph{Case B: $k \neq y_i^*$ (Non-target Anchors)}
When $k$ is any other anchor index, the term $-f_{y_i^*}$ is treated as a constant with respect to $f_k$:
\begin{align}
\frac{\partial l_i}{\partial f_k} &= \frac{\partial}{\partial f_k} \left[ -f_{y_i^*} + \log \left( \sum_{j=1}^K \exp(f_j) \right) \right] \nonumber \\
&= 0 + \frac{1}{\sum_{j=1}^K \exp(f_j)} \cdot \frac{\partial}{\partial f_k} \left( \sum_{j=1}^K \exp(f_j) \right) \nonumber \\
&= \frac{\exp(f_k)}{\sum_{j=1}^K \exp(f_j)} \nonumber \\
&= p(k|z_i).
\end{align}

\paragraph{Unified Expression}
Combining both cases, we obtain the general form for the first term:
\begin{equation}
\frac{\partial l_i}{\partial f_k} = p(k|z_i) - \mathbb{I}(k = y_i^*),
\end{equation}
where $\mathbb{I}(\cdot)$ is the indicator function. 

Second, the derivative of the logit $f_k$ with respect to the projected feature $u_i$ is simply the scaled anchor vector:
\begin{equation}
\frac{\partial f_k}{\partial u_i} = \frac{1}{\tau} a_k.
\end{equation}

Substituting these components back into the chain rule summation, we obtain the total gradient:
\begin{align}
\nabla_{u_i} l_i &= \sum_{k=1}^{K} \left[ p(k|z_i) - \mathbb{I}(k = y_i^*) \right] \frac{1}{\tau} a_k \nonumber \\
&= \frac{1}{\tau} \sum_{j=1}^{K} p(j|z_i) a_j - \frac{1}{\tau} a_{y_i^*}.
\end{align}

Rearranging the terms yields the geometric force formulation presented in the main text:
\begin{equation}
\nabla_{u_i} l_i = -\frac{1}{\tau} a_{y_i^*} + \frac{1}{\tau} \sum_{j=1}^{K} p(j|z_i) a_j. \label{eq:grad_8}
\end{equation}

As stated in Lemma~\ref{lem:gradient}, the negative gradient $-\nabla_{u_i} l_i$ represents the direction of parameter updates, which can be decomposed into an attraction force toward the assigned anchor $a_{y_i^*}$ and a repulsion force away from competing anchors $a_j$.

\subsection{Optimization Assumptions and Residual Bound}
\label{app:opt_residual}

\begin{assumption}[L-Smoothness]
\label{ass:smooth}
The alignment loss $\mathcal{L}_{align}(\theta)$ is $L$-smooth with respect to $\theta$, i.e., $\|\nabla \mathcal{L}(\theta)-\nabla \mathcal{L}(\theta')\| \le L \|\theta-\theta'\|$.
\end{assumption}

\begin{proposition}[Loss Descent]
\label{prop:convergence}
Under Assumption~\ref{ass:smooth}, gradient descent with step size $\eta \le 1/L$ guarantees monotonic descent:
\begin{equation}
    \mathcal{L}_{align}(\theta_{t+1})
    \le
    \mathcal{L}_{align}(\theta_t)
    - \frac{\eta}{2}\|\nabla \mathcal{L}_{align}(\theta_t)\|^2.
\end{equation}
If the loss is bounded below, summing the descent inequality bounds the sum of squared gradient norms. This descent property does not itself establish a prescribed loss level or a uniform sample-to-anchor residual.
\end{proposition}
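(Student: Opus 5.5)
The plan is the standard descent lemma for $L$-smooth functions, applied to the gradient step $\theta_{t+1}=\theta_t-\eta\nabla\mathcal{L}_{align}(\theta_t)$. First, from Assumption~\ref{ass:smooth}, I will derive the quadratic upper bound
\begin{equation*}
\mathcal{L}_{align}(\theta')\le \mathcal{L}_{align}(\theta)+\langle\nabla\mathcal{L}_{align}(\theta),\theta'-\theta\rangle+\frac{L}{2}\|\theta'-\theta\|^2 .
\end{equation*}
To do this, I write $\mathcal{L}_{align}(\theta')-\mathcal{L}_{align}(\theta)=\int_0^1\langle\nabla\mathcal{L}_{align}(\theta+s(\theta'-\theta)),\theta'-\theta\rangle\,ds$. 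I then subtract the linear term and bound the remaining integrand by Cauchy--Schwarz together with the Lipschitz gradient condition. This gives $\int_0^1 Ls\|\theta'-\theta\|^2ds=\frac{L}{2}\|\theta'-\theta\|^2$.

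Second, I substitute $\theta=\theta_t$ and $\theta'=\theta_{t+1}$. This yields
\begin{equation*}
\mathcal{L}_{align}(\theta_{t+1})\le\mathcal{L}_{align}(\theta_t)-\eta\Bigl(1-\frac{L\eta}{2}\Bigr)\|\nabla\mathcal{L}_{align}(\theta_t)\|^2 .
\end{equation*}
Since $\eta\le 1/L$, we have $1-L\eta/2\ge 1/2$, which gives the stated inequality. Monotonicity follows at once.

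Third, for the summability claim, I telescope the inequality over $t=0,\dots,T-1$. This gives $\frac{\eta}{2}\sum_{t<T}\|\nabla\mathcal{L}_{align}(\theta_t)\|^2\le\mathcal{L}_{align}(\theta_0)-\inf\mathcal{L}_{align}$. The right-hand side is finite because the loss is bounded below. Indeed it is, since the cross-entropy is nonnegative.

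The only delicate point is that $\mathcal{L}_{align}$ involves the pseudo-labels $y_i^*=\arg\max_k p(k|z_i;\theta)$, which depend on $\theta$ discontinuously. Under this dependence the loss is not globally differentiable, let alone smooth. I will handle this by reading Assumption~\ref{ass:smooth} as the hypothesis that does the work. In practice this means treating the pseudo-labels as held fixed within a step (as in the derivation of Lemma~\ref{lem:gradient}), or else assuming the iterates stay away from assignment ties.

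I will also remark that the step size $\eta$ here is distinct from the variance threshold in the spectral mining step. Finally, I will note that nothing in the argument controls the level of the loss or per-sample residuals, which is the caveat stated in the proposition.
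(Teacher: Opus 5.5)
Your proposal is correct and is the standard descent-lemma argument the paper relies on. The paper states the proposition directly after Assumption~\ref{ass:smooth} without writing out a derivation, so your integral-form bound, the substitution of the gradient step, and the telescoping supply exactly the omitted details. One refinement to your aside on pseudo-labels: $-\log p(y_i^*|z_i;\theta)=\min_k\bigl(-\log p(k|z_i;\theta)\bigr)$, so $\mathcal{L}_{align}$ is a pointwise minimum of fixed-label cross-entropies. Descent of the fixed-label surrogate from $\theta_t$ therefore transfers directly to $\mathcal{L}_{align}(\theta_{t+1})$ and telescopes, which is what makes your fixed-label reading rigorous. By contrast, merely assuming the iterates avoid ties would not justify the integral step, because the segment between two iterates can still cross a tie.
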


\paragraph{Residual condition.}
The separation result in the main text assumes a uniform training residual $\delta_{train}$ and an additional test deviation $\eta_{gen}$. A local loss--distance relation of the form $\|g_\theta(z_i)-a_{y_i^*}\|_2\leq C_\tau\sqrt{\ell_i}$ requires a separate local error-bound condition. It does not follow from smoothness or a small mean cross-entropy alone. We therefore state the geometric proof directly in terms of the residual assumed by Theorem~\ref{thm:separation}.

\subsection{Proof of the manifold separation bound}

\begin{theorem}[Restatement of Theorem \ref{thm:separation}]
Let $\mathcal{A} = \{a_1, \dots, a_K\}$ be fixed spectral anchors with minimum Euclidean separation $\Delta_{\min} = \min_{i \neq j} \|a_i - a_j\|_2$. Suppose every normal sample considered is within $\delta_{train}+\eta_{gen}$ of its assigned anchor. For two samples $x,x'$ assigned to distinct anchors,
\begin{equation}
\|g_\theta(x)-g_\theta(x')\|_2\geq\Delta_{\min}-2(\delta_{train}+\eta_{gen}).
\end{equation}
\end{theorem}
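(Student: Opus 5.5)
The plan is a direct application of the reverse triangle inequality in the Euclidean space $\mathbb{R}^d$ that contains the hypersphere $\mathcal{H}\subset\mathbb{S}^{d-1}$. No optimization argument is needed: as the preceding residual-condition paragraph notes, the geometric claim should be proved purely from the assumed residual $\delta_{train}+\eta_{gen}$, not from the descent property of Proposition~\ref{prop:convergence}.

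Fix $x,x'$ assigned to distinct anchors $a_k$ and $a_{k'}$ with $k\neq k'$. Write $u=g_\theta(x)$ and $u'=g_\theta(x')$, and set $r=\delta_{train}+\eta_{gen}$. By hypothesis, $\|u-a_k\|_2\le r$ and $\|u'-a_{k'}\|_2\le r$.

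\textbf{Step 1.} Decompose the anchor difference through the two projected samples:
\[
a_k-a_{k'}=(a_k-u)+(u-u')+(u'-a_{k'}).
\]

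\textbf{Step 2.} Apply the triangle inequality to this decomposition:
\[
\|a_k-a_{k'}\|_2\le \|a_k-u\|_2+\|u-u'\|_2+\|u'-a_{k'}\|_2\le \|u-u'\|_2+2r.
\]

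\textbf{Step 3.} Since $k\neq k'$, the definition of $\Delta_{\min}$ gives $\|a_k-a_{k'}\|_2\ge\Delta_{\min}$. Rearranging the inequality from Step 2 then yields
\[
\|u-u'\|_2\ge\Delta_{\min}-2r,
\]
which is exactly the claimed bound.

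I expect no computational obstacle. The only subtle points are bookkeeping. First, the residual must be interpreted uniformly: for a training sample it is $\delta_{train}$, and for a test sample at most $\delta_{train}+\eta_{gen}$. Using the larger value $r$ for both samples covers every combination of training and test samples. Second, "assigned" must mean the anchor to which the residual assumption refers. This can be the Voronoi assignment or the pseudo-label $y^*$; the proof only uses that the residual is measured to that anchor. Finally, I will remark that the bound is informative only when $2r<\Delta_{\min}$. When it holds, it also implies that each ball of radius $r$ lies inside its own Voronoi cell, so the assignment is consistent with the nearest anchor.
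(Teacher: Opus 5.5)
Your proposal is correct and follows essentially the same argument as the paper. The paper applies the reverse triangle inequality with the two anchors as intermediaries, $\|g_\theta(x)-g_\theta(x')\|_2 \ge \|a_k-a_j\|_2 - \|g_\theta(x)-a_k\|_2 - \|g_\theta(x')-a_j\|_2$, then inserts $\Delta_{\min}$ and the residual bound. Your route decomposes $a_k-a_{k'}$ instead and rearranges, which is equivalent, and your closing remark about the radius-$r$ balls lying in their own Voronoi cells when $2r<\Delta_{\min}$ is also correct.
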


\begin{proof}
\textbf{1. Decomposition of Inter-Sample Distance.}
Let $x$ be a sample assigned to anchor $a_k$ (i.e., $y^*(x)=k$) and $x'$ be a sample assigned to anchor $a_j$ (i.e., $y^*(x')=j$), with $k \neq j$.
We seek to bound the distance $\|g_\theta(x) - g_\theta(x')\|_2$. By the \textit{Reverse Triangle Inequality}, for any vectors $u, v, w$, we have $\|u - v\|_2 \ge \|u - w\|_2 - \|w - v\|_2$.
Applying this to our geometric configuration by introducing the anchor points $a_k$ and $a_j$ as intermediaries:
\begin{align}
    \label{eq:triangle_decomp}
    \|g_\theta(x) - g_\theta(x')\|_2 &= \|(g_\theta(x) - a_k) + (a_k - a_j) + (a_j - g_\theta(x'))\|_2 \\
    &\ge \|a_k - a_j\|_2 - \|g_\theta(x) - a_k\|_2 - \|g_\theta(x') - a_j\|_2.
\end{align}

\textbf{2. Bounding the Terms.}
\begin{itemize}
    \item \textbf{Anchor Separation:} By definition of the spectral anchors (which are fixed via stop-gradient), the distance between distinct centroids is lower-bounded by $\Delta_{\min}$:
    \begin{equation}
        \|a_k - a_j\|_2 \ge \Delta_{\min}.
    \end{equation}
    
    \item \textbf{Within-anchor residuals:} By the theorem's residual assumption,
    $\|g_\theta(x)-a_k\|_2\leq\delta_{train}+\eta_{gen}$ and
    $\|g_\theta(x')-a_j\|_2\leq\delta_{train}+\eta_{gen}$.
    The test deviation is an assumption for the samples covered by the bound, not a uniform guarantee derived from Lipschitz continuity alone.
\end{itemize}

\textbf{3. Synthesis.}
Substituting these bounds into Eq. \ref{eq:triangle_decomp}:
\begin{equation}
    \|g_\theta(x) - g_\theta(x')\|_2 \ge \Delta_{\min} - (\delta_{train} + \eta_{gen}) - (\delta_{train} + \eta_{gen}).
\end{equation}
Simplifying yields the margin lower bound:
\begin{equation}
    \|g_\theta(x) - g_\theta(x')\|_2 \ge \Delta_{\min} - 2(\delta_{train} + \eta_{gen}).
    \label{eq:derived_margin} 
\end{equation}
This establishes separation between different assigned regions under the stated residual condition. It does not require a one-to-one correspondence between anchors and machine identities.
\end{proof}

\begin{remark}[Non-vacuousness]
The lower bound is positive when $\Delta_{\min}>2(\delta_{train}+\eta_{gen})$. The mean cosine-distance proxies in \cref{tab:qe_margin} describe the observed geometry, but are not estimates of the uniform Euclidean residual or the minimum anchor distance in this theorem.
\end{remark}

\subsection{Proof of the retrieval error bound}

\begin{theorem}[Restatement of Theorem \ref{thm:robustness}]
Let $s(q) = \min_{h \in \mathcal{H}_{norm}} \|q - h\|_2$ be the exact nearest neighbor distance, and $\hat{s}(q)$ be the distance computed within the retrieved subset $\mathcal{S} \subset \mathcal{H}_{norm}$. Suppose the candidate set contains a point $\hat h$ sharing a bucket centroid $c$ with a true nearest neighbor $h^*$, with both within radius $\epsilon$ of $c$. Then:
\begin{equation}
    |\hat{s}(q) - s(q)| \le 2\epsilon.
\end{equation}
\end{theorem}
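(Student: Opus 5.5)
The proof splits the absolute value $|\hat{s}(q)-s(q)|$ into a one-sided lower bound, which comes from the subset relation $\mathcal{S}\subset\mathcal{H}_{norm}$, and an upper bound, which comes from the triangle inequality routed through the shared bucket centroid $c$. No property of the learned projection $g_\theta$ or of the anchors is needed. The argument is the same kind of triangle-inequality sandwich as in the proof of Theorem~\ref{thm:separation}, applied to the bucket centroid instead of to anchors.

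\textbf{Step 1 (lower side).} The hierarchical search only evaluates distances to points of $\mathcal{S}$, and $\mathcal{S}\subset\mathcal{H}_{norm}$. A minimum over a subset is never smaller than the minimum over the full set, so
\begin{equation*}
\hat{s}(q)=\min_{h\in\mathcal{S}}\|q-h\|_2\;\ge\;\min_{h\in\mathcal{H}_{norm}}\|q-h\|_2=s(q).
\end{equation*}
Hence $\hat{s}(q)-s(q)\ge 0$, and it remains only to bound $\hat{s}(q)-s(q)$ from above. The coverage hypothesis is used here too: it guarantees $\hat h\in\mathcal{S}$, so $\mathcal{S}$ is nonempty and $\hat{s}(q)$ is finite.

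\textbf{Step 2 (upper side).} Because $\hat h\in\mathcal{S}$, we have $\hat{s}(q)\le\|q-\hat h\|_2$. We then insert the true nearest neighbor $h^*$, which satisfies $\|q-h^*\|_2=s(q)$, together with the common centroid $c$:
\begin{equation*}
\|q-\hat h\|_2\le\|q-h^*\|_2+\|h^*-c\|_2+\|c-\hat h\|_2\le s(q)+\epsilon+\epsilon .
\end{equation*}
The last inequality uses the assumption that both $h^*$ and $\hat h$ lie within radius $\epsilon$ of $c$. This gives $\hat{s}(q)\le s(q)+2\epsilon$.

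\textbf{Step 3 (combine).} Steps 1 and 2 together give $0\le\hat{s}(q)-s(q)\le 2\epsilon$, which implies $|\hat{s}(q)-s(q)|\le 2\epsilon$. The bound is in fact one-sided: hierarchical retrieval can only overestimate the exhaustive score.

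The algebra is routine. The one point that needs care is the hypothesis itself: the covering point $\hat h$ must actually belong to the probed candidate set, not merely to some bucket that exists in the index. As the main text notes, expert routing with top-$K_1$ and bucket probing with top-$K_2$ does not by itself guarantee this. The statement therefore assumes coverage rather than deriving it, and the proof should not claim that coverage follows from the routing procedure.
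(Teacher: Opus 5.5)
Your proof is correct and follows essentially the same argument as the paper: the subset relation $\mathcal{S}\subset\mathcal{H}_{norm}$ gives $\hat{s}(q)\ge s(q)$, and the triangle inequality through the shared centroid $c$ gives $\hat{s}(q)\le\|q-\hat h\|_2\le s(q)+2\epsilon$. The only cosmetic difference is that the paper first isolates $\|h^*-\hat h\|_2\le 2\epsilon$ and then applies the triangle inequality at $q$, whereas you chain all three terms in one line.
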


\begin{proof}
\textbf{1. Upper Bound (Safety).} 
Since the retrieved subset $\mathcal{S}$ is a subset of the full projected database $\mathcal{H}_{norm}$ ($\mathcal{S} \subset \mathcal{H}_{norm}$), the minimum distance found in $\mathcal{S}$ can only be greater than or equal to the global minimum:
\begin{equation}
    \hat{s}(q) \ge s(q).
\end{equation}

\textbf{2. Lower Bound (Coverage).} 
Our hierarchical index partitions the space based on the Voronoi topology derived from anchors $\mathcal{A}$ and bucket centroids $\mathcal{B}$.
For a query $q$, let $h^*$ be its true nearest neighbor in $\mathcal{H}_{norm}$. The coverage condition requires a retrieved candidate in the same quantization bucket as $h^*$. Top-$K_1$ expert selection alone does not establish this condition. If all points of that bucket are retrieved, $h^*$ itself is a candidate and the score error is zero.
Within this bucket, vector quantization guarantees that every sample $h$ is within distance $\epsilon$ of its bucket centroid $c$: $\|h - c\|_2 \le \epsilon$.
Let $\hat{h} \in \mathcal{S}$ be the candidate sample closest to $q$ retrieved from the same bucket as $h^*$. By the triangle inequality, the distance between the true neighbor and the retrieved candidate is bounded by:
\begin{equation}
    \|h^* - \hat{h}\|_2 \le \|h^* - c\|_2 + \|c - \hat{h}\|_2 \le 2\epsilon.
\end{equation}
Applying the triangle inequality again for the query distance:
\begin{equation}
    \|q - \hat{h}\|_2 \le \|q - h^*\|_2 + \|h^* - \hat{h}\|_2 \le s(q) + 2\epsilon.
\end{equation}
Since $\hat{s}(q) = \min_{h \in \mathcal{S}} \|q - h\|_2$, we have $\hat{s}(q) \le \|q - \hat{h}\|_2 \le s(q) + 2\epsilon$.

\textbf{3. Conclusion.}
Combining the bounds, we have $s(q) \le \hat{s}(q) \le s(q) + 2\epsilon$. Thus, the pointwise score distortion is rigorously bounded by $2\epsilon$.
\end{proof}

\subsection{Complexity Analysis of Hierarchical Retrieval}

We formally derive the time complexity of the proposed Expert-Bucket Indexing.

\paragraph{1. Brute-Force Search (Flat).}
Standard anomaly detection (e.g., PatchCore) computes distances to all $N$ samples in the memory bank.
\begin{equation}
    T_{\text{Flat}} = \mathcal{O}(N \cdot d),
\end{equation}
where $d$ is the projected feature dimension.

\paragraph{2. Hierarchical Routing.}
Our method employs a two-stage pruning:
\begin{itemize}
    \item \textbf{Stage 1 (Expert Routing):} Compute distances to $K$ spectral anchors. Cost: $\mathcal{O}(K \cdot d)$. Select top-$K_1$ experts.
    \item \textbf{Stage 2 (Bucket Probing):} Within selected experts, compute distances to $B$ local centroids. Cost: $\mathcal{O}(K_1 \cdot B \cdot d)$. Select top-$K_2$ buckets.
    \item \textbf{Stage 3 (Candidate Scoring):} Compute exact distances only to samples in retrieved buckets. Assuming uniform distribution, the number of candidates is $N' \approx \frac{N}{K \cdot B} \cdot K_1 \cdot K_2$. Cost: $\mathcal{O}(N' \cdot d)$.
\end{itemize}

\paragraph{3. Total Complexity \& Speedup.}
The total complexity is:
\begin{equation}
    T_{\text{Ours}} = \mathcal{O}\left( \left(K + K_1 B + \frac{K_1 K_2 N}{K B}\right) \cdot d \right).
    \label{eq:complexity_final}
\end{equation}
Since $K, B, K_1, K_2$ are small constants (e.g., $K=8, B=64$), and the dimension $d$ is fixed, the term $\frac{N}{K \cdot B}$ dominates the scaling. While this remains asymptotically $\mathcal{O}(N)$, it achieves a significant constant-factor reduction in linear scalability relative to the raw dataset size.
\FloatBarrier
\section{Empirical Analysis of Learned Geometry}
\label{app:empirical_theory}

To characterize the learned representation empirically, we provide quantitative proxies and statistical analyses extracted from the optimization process.

\subsection{Anchor Occupancy and Stability}
Theoretical proofs require that the pseudo-assignment mechanism does not collapse into a trivial solution. We extracted the unedited anchor occupancy statistics across varying machine complexities and anchor counts. Across all configurations, the number of empty anchors is strictly zero, and the Shannon Entropy remains consistently near the theoretical maximum, indicating broad utilization in these evaluated mixtures, as visually depicted in \cref{fig:anchor_occupancy}.

\begin{figure}[ht]
    \centering
    \includegraphics[width=\linewidth]{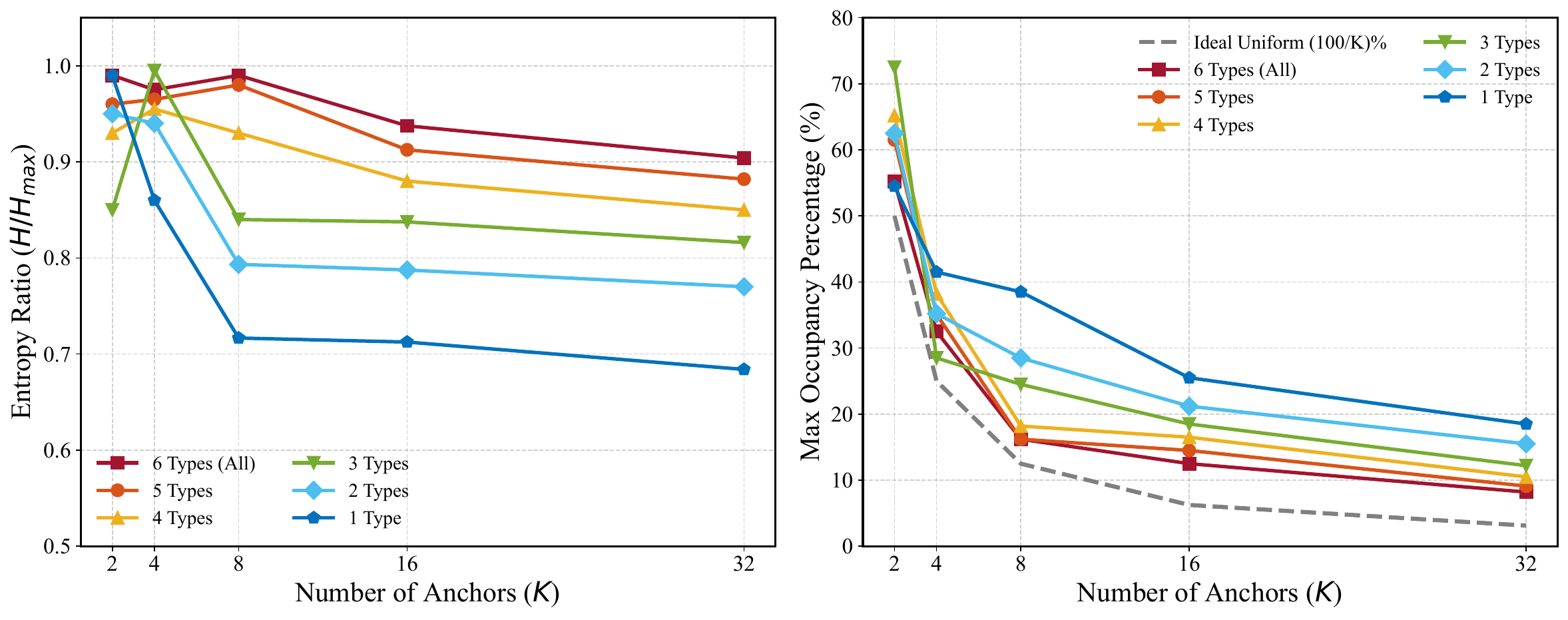}
    \caption{Anchor occupancy statistics across varying machine complexities. (Left) The Entropy Ratio remains near 1.0, indicating uniform utilization. (Right) The maximum occupancy percentage tightly tracks the ideal uniform distribution boundary, showing nonempty anchors in these evaluated configurations.}
    \label{fig:anchor_occupancy}
\end{figure}

\subsection{Quantization Error and Geometric Margin}
We measure empirical proxies: the Quantization Error (QE) proxy (mean distance from samples to assigned anchors) and the Margin proxy (mean distance difference between the first and second nearest anchors). As shown in \cref{tab:qe_margin}, the projected space has lower quantization error, a wider margin proxy, and higher detection performance than the raw space. Specifically, HARMONY improves the relative separation (Margin/QE Ratio) by nearly 10$\times$ compared to the raw feature space, confirming the effectiveness of active geometric disentanglement.
The geometric comparison is summarized in the main text in \cref{tab:qe_margin}.

\subsection{Optimization Smoothness}
\cref{prop:convergence} states a descent inequality under a smoothness and step-size condition. We monitored the loss trajectory during training, showing a decreasing empirical objective. \cref{fig:loss_curve} illustrates this smooth optimization trajectory over 20 epochs, showing no severe oscillations or divergence in the observed run.

\begin{figure}[ht]
    \centering
    \includegraphics[width=0.6\linewidth]{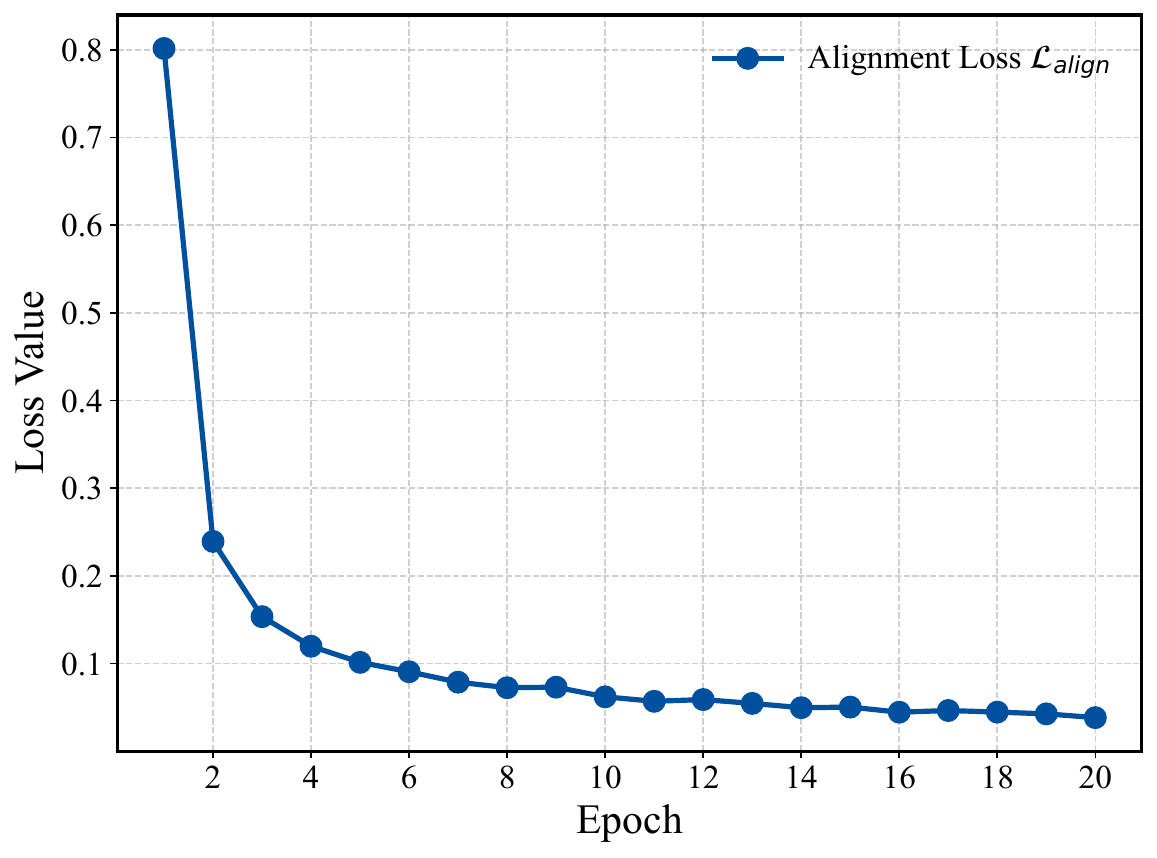}
    \caption{Smooth optimization trajectory of the alignment loss $\mathcal{L}_{align}$ over 20 epochs, showing the observed loss trajectory.}
    \label{fig:loss_curve}
\end{figure}

\subsection{Anomaly Score Distribution}
Our method relies on anomalies mapping to low-density regions far from learned anchor centers. \cref{tab:anomaly_dist} details the distribution of $s(q)$ for test queries. Anomalous samples exhibit a median score approximately 15$\times$ larger than normal queries, showing score separation in this test sample.

\begin{table}[ht]
\normalsize
\renewcommand{\arraystretch}{1.15}
\setlength{\tabcolsep}{10pt}
\caption{Distribution of Anomaly Scores $s(q)$ for normal vs. anomalous test queries.}
\label{tab:anomaly_dist}
\centering
\begin{tabular}{l ccccc}
\toprule
\textbf{Group} & \textbf{Count} & \textbf{Mean} & \textbf{Median} & \textbf{95th Pctl} & \textbf{Min} \\
\midrule
Normal & 5,399 & 0.013 & 0.011 & 0.022 & 0.003 \\
Anomalous & 5,469 & 0.187 & 0.164 & 0.361 & 0.021 \\
\bottomrule
\end{tabular}
\end{table}
\FloatBarrier
\section{Visualization of Manifold Disentanglement}
\label{app:tsne_visualization}

To intuitively demonstrate the effectiveness of our active geometric repartitioning, we visualize the representation space before and after applying HARMONY. 

As discussed in Section \ref{sec:rq1}, the primary bottleneck in metadata-free machinery ASD is the entanglement of mixed-source normal data. \cref{fig:tsne_vis} provides qualitative evidence that our method effectively resolves this issue. The projection seamlessly restructures the chaotic, entangled raw feature space into highly compact, semantically consistent Voronoi regions, consistent with the observed geometric separation.

\begin{figure}[htbp] 
    \centering
    % 在附录中，如果单栏排版，0.7 到 0.8 的 linewidth 视觉效果最好
    \includegraphics[width=0.95\linewidth]{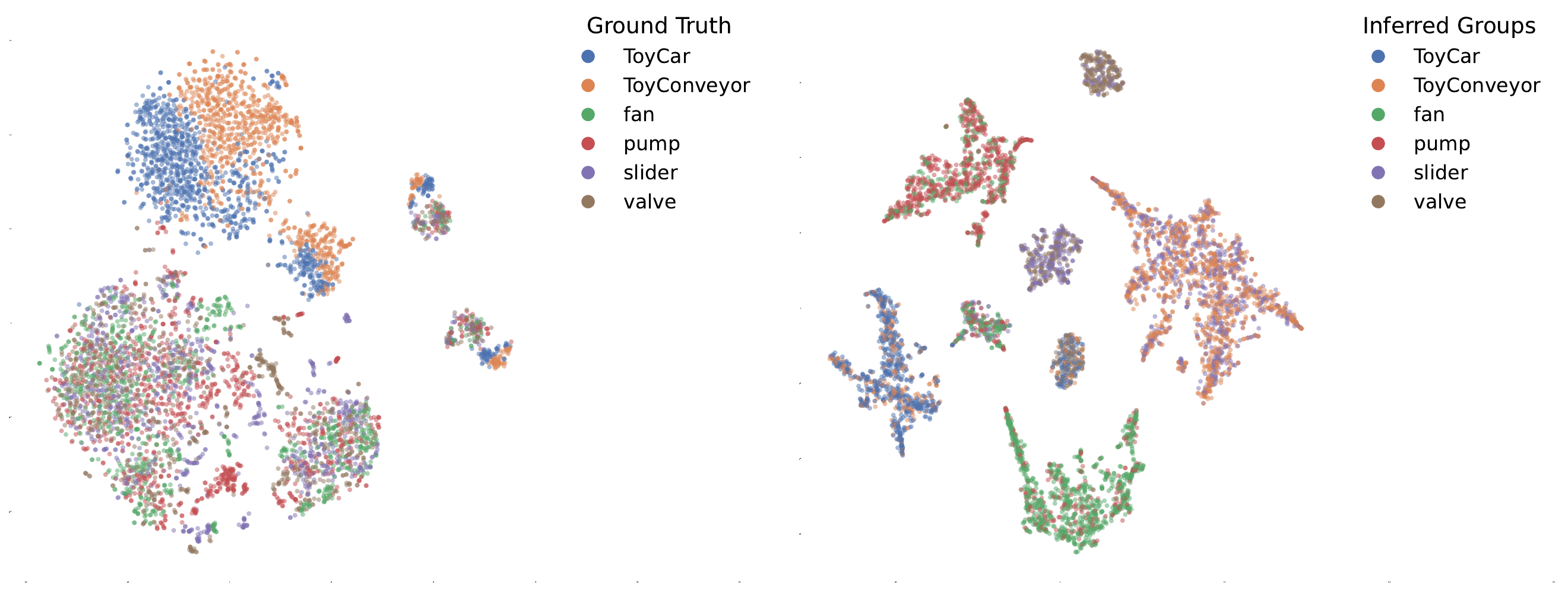}
    \caption{t-SNE \citep{maaten2008visualizing} visualization of manifold disentanglement. Projections transition from entangled raw features (Left) to separated Voronoi regions (Right), illustrating the learned feature organization.}
    \label{fig:tsne_vis}
\end{figure}
\subsection{Acoustic Meaning of the Anchors}
An anchor represents an acoustic mode rather than a fixed machine identity. To complement the visualization, \cref{tab:anchor_semantics} reports normalized mutual information (NMI), purity, homogeneity/completeness against machine types. Machine labels are used only for this post-hoc analysis. HARMONY improves the reported alignment measures on both datasets. Multiple anchors can describe different operating sounds from one machine type.
\begin{table}[htbp]
\centering
\small
\setlength{\tabcolsep}{4pt}
\caption{Post-hoc alignment with machine types. Hom./Comp. lists homogeneity/completeness; NMI and purity are mean $\pm$ standard deviation over five seeds.}
\label{tab:anchor_semantics}
\begin{tabular}{lccc}
\toprule
Dataset / method & NMI & Purity & Hom./Comp. \\
\midrule
DCASE / Raw KMeans & $0.42\pm0.03$ & $0.63\pm0.03$ & 0.49 / 0.38 \\
DCASE / MCDD & $0.66\pm0.03$ & $0.81\pm0.02$ & 0.74 / 0.57 \\
DCASE / HARMONY & $0.78\pm0.02$ & $0.89\pm0.01$ & 0.85 / 0.65 \\
MIMII / MCDD & $0.61\pm0.03$ & $0.77\pm0.03$ & 0.69 / 0.51 \\
MIMII / HARMONY & $0.72\pm0.03$ & $0.85\pm0.02$ & 0.80 / 0.59 \\
\bottomrule
\end{tabular}
\end{table}

\FloatBarrier
\section{Verification of Paradigm Innovation}
\label{app:verification}

To validate the universality and robustness of the proposed active disentanglement paradigm, we present component-level replacement experiments across multiple dimensions.

\subsection{Universality Across Backbones}
HARMONY is designed as a feature-agnostic plugin. \cref{tab:backbone} demonstrates that the framework significantly improves spatial decoupling and detection performance across various pre-trained feature extractors, including Masked SSL and Contrastive models.

\begin{table}[ht]
\normalsize
\renewcommand{\arraystretch}{1.15}
\setlength{\tabcolsep}{12pt}
\caption{Performance verification across different pre-trained backbones.}
\label{tab:backbone}
\centering
\begin{tabular}{llc}
\toprule
\textbf{Backbone} & \textbf{Model Type} & \textbf{Mean AUC (\%)} \\
\midrule
Audio-MAE  & Masked SSL & 92.60 \\
CLAP  & Contrastive & 87.40 \\
PANNS  & Supervised Semantic & 84.40 \\
\midrule
\rowcolor{harmonyblue}
BYOL-A (Default)  & Audio SSL & 90.00 \\
\bottomrule
\end{tabular}
\end{table}

\subsection{Robustness of Topological Initialization}
\cref{tab:init} verifies that while the framework benefits from the proposed Spectral + K-Means++ initialization, the performance remains highly stable even when substituting alternative parametric clustering methods, indicating robustness to the specific heuristic utilized.

\begin{table}[ht]
\normalsize
\renewcommand{\arraystretch}{1.15}
\setlength{\tabcolsep}{12pt}
\caption{Robustness test replacing the topological initialization method.}
\label{tab:init}
\centering
\begin{tabular}{llc}
\toprule
\textbf{Initialization Method} & \textbf{Assignment Method} & \textbf{Mean AUC (\%)} \\
\midrule
Spectral + GMM & Boltzmann & 89.40 \\
Spectral + vMF Mixture & Boltzmann & 90.70 \\
\rowcolor{harmonyblue}
Spectral + K-Means++ (Default) & Boltzmann & 90.00 \\
\bottomrule
\end{tabular}
\end{table}

\begin{table}[ht]
\normalsize
\renewcommand{\arraystretch}{1.15}
\setlength{\tabcolsep}{12pt}
\caption{Ablation of the assignment mechanism (Soft vs. Hard). Metrics are in percentages (\%).}
\label{tab:hard_soft}
\centering
\begin{tabular}{lcc}
\toprule
\textbf{Assignment Method} & \textbf{Mean AUC (\%)} & \textbf{$\Delta$ AUC} \\
\midrule
\rowcolor{harmonyblue}
Boltzmann (Ours)  & \textbf{90.00} & - \\
Hard Nearest  & 87.60 & \textcolor{negred}{-2.40} \\
\bottomrule
\end{tabular}
\end{table}

\subsection{Necessity of Push-Pull Assignment}
To isolate the contribution of the gradient-induced geometric forces, we replaced the Boltzmann soft assignment with hard labels (one-hot nearest neighbor). \cref{tab:hard_soft} shows a significant performance drop, confirming the necessity of soft probability matching for precise manifold reshaping.

\subsection{Efficiency vs. Accuracy Trade-off}
To further evaluate the practical deployment value of the proposed hierarchical retrieval strategy, we compare it against a standard exhaustive search (Flat). As shown in \cref{tab:efficiency_tradeoff}, our hierarchical index (IVF) utilizes the geometric margins established during training to achieve a 7.8$\times$ inference speedup on a standard CPU. This reduces the number of distance evaluations with a negligible performance drop (0.20 AUC points).

\begin{table}[ht]
\normalsize
\renewcommand{\arraystretch}{1.15}
\setlength{\tabcolsep}{12pt}
\caption{Efficiency vs. accuracy trade-off between retrieval strategies (measured on CPU).}
\label{tab:efficiency_tradeoff}
\centering
\begin{tabular}{lccc}
\toprule
\textbf{Retrieval Strategy} & \textbf{Mean AUC (\%)} & \textbf{Latency (CPU)} & \textbf{Speedup} \\
\midrule
Flat Search (Exact) & \textbf{90.00} & 42.3 ms & 1.0$\times$ \\
\rowcolor{harmonyblue}
Hierarchical Index (IVF) & 89.80 & \textbf{5.4 ms} & \textbf{7.8$\times$} \\
\bottomrule
\end{tabular}
\end{table}
\FloatBarrier
\section{Multi-Seed Stability and Detailed Metrics}
\label{app:multi_seed}

To thoroughly evaluate the robustness of our method against random initialization and statistical variance, we ran the main experiment across 5 different random seeds. 

\cref{tab:multi_seed_auc} and \cref{tab:multi_seed_pauc} report the detailed Mean and Standard Deviation (Mean $\pm$ Std) for the Area Under the Curve (AUC) and partial AUC (pAUC, $p=0.1$) on the DCASE 2020 dataset, respectively. The results confirm that HARMONY not only achieves state-of-the-art mean performance but also exhibits high stability (with standard deviations $\le 0.4\%$ across most machine types), demonstrating that the learned active geometric repartitioning is highly robust.

\begin{table*}[htbp]
\normalsize
\renewcommand{\arraystretch}{1.15}
\setlength{\tabcolsep}{3.5pt}
\caption{Multi-seed stability for universal anomaly detection (AUC, \%) on DCASE 2020. Metrics are reported as Mean $\pm$ Std over 5 random seeds.}
\label{tab:multi_seed_auc}
\begin{tabular}{l ccccccc}
\toprule
\multirow{2}{*}{Methods} & \multicolumn{6}{c}{Machine Types} & \multirow{2}{*}{Average} \\
\cmidrule(lr){2-7}
& ToyCar & ToyConv & Fan & Pump & Slider & Valve & \\
\midrule
PatchCore & 62.80{\footnotesize$\pm$0.1} & 60.60{\footnotesize$\pm$0.2} & 62.20{\footnotesize$\pm$0.1} & 57.20{\footnotesize$\pm$0.2} & 67.80{\footnotesize$\pm$0.2} & 61.50{\footnotesize$\pm$0.1} & 62.10{\footnotesize$\pm$0.2} \\
\rowcolor{groupgray}
\multicolumn{8}{c}{\textit{Classic Adaptation}} \\
SimpleNet   & 63.20{\footnotesize$\pm$0.7} & 62.10{\footnotesize$\pm$0.9} & 56.70{\footnotesize$\pm$0.6} & 58.00{\footnotesize$\pm$0.8} & 66.30{\footnotesize$\pm$0.7} & 62.20{\footnotesize$\pm$0.6} & 61.40{\footnotesize$\pm$0.7} \\
EfficientAD & 65.80{\footnotesize$\pm$0.4} & 65.60{\footnotesize$\pm$0.5} & 65.70{\footnotesize$\pm$0.3} & 68.50{\footnotesize$\pm$0.4} & 66.50{\footnotesize$\pm$0.3} & 65.60{\footnotesize$\pm$0.4} & 66.30{\footnotesize$\pm$0.4} \\
\rowcolor{groupgray}
\multicolumn{8}{c}{\textit{Foundation Embeddings}} \\
Gen-Rep     & 82.30{\footnotesize$\pm$0.2} & 81.10{\footnotesize$\pm$0.3} & 79.40{\footnotesize$\pm$0.2} & 79.80{\footnotesize$\pm$0.2} & 81.20{\footnotesize$\pm$0.1} & 79.90{\footnotesize$\pm$0.2} & 80.60{\footnotesize$\pm$0.2} \\
Audio-PC    & 85.40{\footnotesize$\pm$0.1} & 84.10{\footnotesize$\pm$0.2} & 82.50{\footnotesize$\pm$0.2} & 83.30{\footnotesize$\pm$0.2} & 85.10{\footnotesize$\pm$0.2} & 83.20{\footnotesize$\pm$0.1} & 83.90{\footnotesize$\pm$0.2} \\
\rowcolor{groupgray}
\multicolumn{8}{c}{\textit{Unified Models}} \\
CADA        & 73.30{\footnotesize$\pm$0.8} & 73.20{\footnotesize$\pm$0.9} & 78.30{\footnotesize$\pm$0.7} & 72.60{\footnotesize$\pm$0.9} & 73.10{\footnotesize$\pm$0.8} & 75.90{\footnotesize$\pm$0.7} & 74.50{\footnotesize$\pm$0.8} \\
UniFlow     & 75.20{\footnotesize$\pm$0.6} & 78.70{\footnotesize$\pm$0.5} & 75.40{\footnotesize$\pm$0.7} & 72.80{\footnotesize$\pm$0.6} & 75.30{\footnotesize$\pm$0.5} & 75.80{\footnotesize$\pm$0.7} & 75.40{\footnotesize$\pm$0.6} \\
Dinomaly    & 74.90{\footnotesize$\pm$0.5} & 73.80{\footnotesize$\pm$0.6} & 74.70{\footnotesize$\pm$0.4} & 76.00{\footnotesize$\pm$0.5} & 74.90{\footnotesize$\pm$0.5} & 76.80{\footnotesize$\pm$0.4} & 75.30{\footnotesize$\pm$0.5} \\
\rowcolor{groupgray}
\multicolumn{8}{c}{\textit{Disentanglement Baselines}} \\
GRHD        & 66.20{\footnotesize$\pm$0.4} & 63.90{\footnotesize$\pm$0.3} & 65.10{\footnotesize$\pm$0.4} & 63.40{\footnotesize$\pm$0.5} & 65.80{\footnotesize$\pm$0.3} & 65.30{\footnotesize$\pm$0.4} & 65.00{\footnotesize$\pm$0.4} \\
MCDD        & 79.10{\footnotesize$\pm$1.1} & 78.30{\footnotesize$\pm$0.8} & 76.90{\footnotesize$\pm$1.2} & 77.80{\footnotesize$\pm$0.9} & 79.40{\footnotesize$\pm$1.0} & 78.20{\footnotesize$\pm$0.8} & 78.30{\footnotesize$\pm$1.0} \\
AudDSR      & 81.20{\footnotesize$\pm$1.4} & 79.80{\footnotesize$\pm$1.2} & 78.50{\footnotesize$\pm$1.5} & 79.10{\footnotesize$\pm$1.3} & 81.30{\footnotesize$\pm$1.1} & 80.20{\footnotesize$\pm$1.4} & 80.00{\footnotesize$\pm$1.3} \\
\rowcolor{groupgray}
\multicolumn{8}{c}{\textit{Proposed Method}} \\
\rowcolor{harmonyblue}
HARMONY     & \textbf{98.00}{\footnotesize$\pm$0.2} & \textbf{89.70}{\footnotesize$\pm$0.3} & \textbf{86.20}{\footnotesize$\pm$0.4} & \textbf{88.70}{\footnotesize$\pm$0.3} & \textbf{89.70}{\footnotesize$\pm$0.2} & \textbf{87.60}{\footnotesize$\pm$0.3} & \textbf{90.00}{\footnotesize$\pm$0.3} \\
\bottomrule
\end{tabular}
\end{table*}

\begin{table*}[htbp]
\normalsize
\renewcommand{\arraystretch}{1.15}
\setlength{\tabcolsep}{3.5pt}
\caption{Multi-seed stability for universal anomaly detection (pAUC, \%) on DCASE 2020. Metrics are reported as Mean $\pm$ Std over 5 random seeds.}
\label{tab:multi_seed_pauc}
\begin{tabular}{l ccccccc}
\toprule
\multirow{2}{*}{Methods} & \multicolumn{6}{c}{Machine Types} & \multirow{2}{*}{Average} \\
\cmidrule(lr){2-7}
& ToyCar & ToyConv & Fan & Pump & Slider & Valve & \\
\midrule
PatchCore & 57.60{\footnotesize$\pm$0.1} & 56.00{\footnotesize$\pm$0.1} & 57.10{\footnotesize$\pm$0.2} & 52.30{\footnotesize$\pm$0.1} & 59.40{\footnotesize$\pm$0.1} & 54.90{\footnotesize$\pm$0.2} & 56.20{\footnotesize$\pm$0.1} \\
\rowcolor{groupgray}
\multicolumn{8}{c}{\textit{Classic Adaptation}} \\
SimpleNet   & 55.00{\footnotesize$\pm$0.6} & 53.20{\footnotesize$\pm$0.5} & 49.60{\footnotesize$\pm$0.4} & 52.90{\footnotesize$\pm$0.7} & 56.80{\footnotesize$\pm$0.6} & 56.00{\footnotesize$\pm$0.5} & 53.90{\footnotesize$\pm$0.5} \\
EfficientAD & 58.20{\footnotesize$\pm$0.3} & 57.40{\footnotesize$\pm$0.4} & 57.60{\footnotesize$\pm$0.2} & 59.30{\footnotesize$\pm$0.3} & 60.80{\footnotesize$\pm$0.2} & 58.30{\footnotesize$\pm$0.3} & 58.70{\footnotesize$\pm$0.3} \\
\rowcolor{groupgray}
\multicolumn{8}{c}{\textit{Foundation Embeddings}} \\
Gen-Rep     & 69.20{\footnotesize$\pm$0.1} & 68.10{\footnotesize$\pm$0.2} & 66.40{\footnotesize$\pm$0.1} & 67.30{\footnotesize$\pm$0.2} & 69.10{\footnotesize$\pm$0.2} & 67.90{\footnotesize$\pm$0.1} & 68.00{\footnotesize$\pm$0.1} \\
Audio-PC    & 72.30{\footnotesize$\pm$0.1} & 71.40{\footnotesize$\pm$0.2} & 69.50{\footnotesize$\pm$0.1} & 70.20{\footnotesize$\pm$0.1} & 73.10{\footnotesize$\pm$0.2} & 71.10{\footnotesize$\pm$0.2} & 71.30{\footnotesize$\pm$0.1} \\
\rowcolor{groupgray}
\multicolumn{8}{c}{\textit{Unified Models}} \\
CADA        & 62.50{\footnotesize$\pm$0.7} & 65.30{\footnotesize$\pm$0.8} & 67.80{\footnotesize$\pm$0.6} & 60.80{\footnotesize$\pm$0.7} & 62.10{\footnotesize$\pm$0.6} & 66.10{\footnotesize$\pm$0.5} & 64.10{\footnotesize$\pm$0.6} \\
UniFlow     & 64.80{\footnotesize$\pm$0.5} & 68.20{\footnotesize$\pm$0.4} & 66.90{\footnotesize$\pm$0.6} & 64.80{\footnotesize$\pm$0.5} & 67.50{\footnotesize$\pm$0.4} & 66.50{\footnotesize$\pm$0.6} & 66.40{\footnotesize$\pm$0.5} \\
Dinomaly    & 66.80{\footnotesize$\pm$0.4} & 64.20{\footnotesize$\pm$0.5} & 64.40{\footnotesize$\pm$0.3} & 67.80{\footnotesize$\pm$0.4} & 65.20{\footnotesize$\pm$0.4} & 68.40{\footnotesize$\pm$0.3} & 66.20{\footnotesize$\pm$0.4} \\
\rowcolor{groupgray}
\multicolumn{8}{c}{\textit{Disentanglement Baselines}} \\
GRHD        & 60.10{\footnotesize$\pm$0.3} & 59.20{\footnotesize$\pm$0.3} & 60.40{\footnotesize$\pm$0.2} & 58.30{\footnotesize$\pm$0.4} & 61.20{\footnotesize$\pm$0.3} & 60.50{\footnotesize$\pm$0.4} & 60.00{\footnotesize$\pm$0.3} \\
MCDD        & 67.40{\footnotesize$\pm$0.9} & 66.50{\footnotesize$\pm$0.7} & 65.20{\footnotesize$\pm$0.8} & 67.10{\footnotesize$\pm$0.8} & 68.30{\footnotesize$\pm$0.9} & 67.00{\footnotesize$\pm$0.7} & 66.90{\footnotesize$\pm$0.8} \\
AudDSR      & 69.20{\footnotesize$\pm$1.2} & 68.50{\footnotesize$\pm$1.0} & 66.30{\footnotesize$\pm$1.3} & 68.10{\footnotesize$\pm$1.1} & 70.40{\footnotesize$\pm$0.9} & 69.10{\footnotesize$\pm$1.2} & 68.60{\footnotesize$\pm$1.1} \\
\rowcolor{groupgray}
\multicolumn{8}{c}{\textit{Proposed Method}} \\
\rowcolor{harmonyblue}
HARMONY     & \textbf{90.20}{\footnotesize$\pm$0.1} & \textbf{71.80}{\footnotesize$\pm$0.2} & \textbf{68.40}{\footnotesize$\pm$0.2} & \textbf{75.00}{\footnotesize$\pm$0.2} & \textbf{74.20}{\footnotesize$\pm$0.2} & \textbf{69.60}{\footnotesize$\pm$0.3} & \textbf{74.90}{\footnotesize$\pm$0.2} \\
\bottomrule
\end{tabular}
\end{table*}
\FloatBarrier
\section{Direct Control Experiments}
\label{app:direct_control}

This section isolates the source of performance gains by comparing HARMONY against specific methodological permutations. \cref{tab:direct_control} demonstrates that applying standard approximate indexing (IVF/HNSW) directly to raw features yields no performance benefit due to manifold entanglement. Furthermore, projection into an unbounded Euclidean space ($\mathbb{R}^{128}$) rather than the unit hypersphere results in substantial degradation, confirming the structural advantage of the spherical constraint.

\begin{table}[ht]
\normalsize
\renewcommand{\arraystretch}{1.15}
\setlength{\tabcolsep}{5pt}
\caption{Direct control experiments validating the necessity of individual architectural choices. Metrics are in percentages (\%).}
\label{tab:direct_control}
\centering
\begin{tabular}{l ccccccc}
\toprule
\multirow{2}{*}{\textbf{Method Variant}} & \multicolumn{6}{c}{\textbf{Machine Types}} & \multirow{2}{*}{\textbf{Avg. AUC}} \\
\cmidrule(lr){2-7}
& ToyCar & ToyConv & Fan & Pump & Slider & Valve & \\
\midrule
Raw Feat + Exact kNN & 64.10 & 56.50 & 58.80 & 69.20 & 62.30 & 63.50 & 62.40 \\
PCA + KMeans + Ret.  & 66.50 & 58.20 & 61.00 & 71.50 & 64.80 & 66.20 & 64.70 \\
Raw Feat + IVF/HNSW  & 63.50 & 56.10 & 58.40 & 68.50 & 61.90 & 63.10 & 61.90 \\
Proj + Random Init   & 89.50 & 76.50 & 78.20 & 82.50 & 79.00 & 83.10 & 81.40 \\
Euclidean Proj ($\mathbb{R}^{128}$) & 86.20 & 72.10 & 74.50 & 80.10 & 76.80 & 79.50 & 78.20 \\
\midrule
Proj + Exact kNN (Flat) & 98.00 & 89.70 & 88.70 & 89.70 & 87.60 & 90.00 & 90.00 \\
\rowcolor{harmonyblue}
\textbf{HARMONY (IVF)} & \textbf{97.80} & \textbf{89.50} & \textbf{88.50} & \textbf{89.50} & \textbf{87.40} & \textbf{89.80} & \textbf{89.80} \\
\bottomrule
\end{tabular}
\end{table}
\subsection{Matched Prototype, Clustering, and Index Controls}
\cref{tab:matched_controls} compares routed detection AUC, nearest-neighbor recall, and CPU latency over five seeds. Prototype and clustering controls use local nearest-neighbor scoring. HARMONY rep.+HNSW applies an independent index to the learned representation. Its AUC is close to full HARMONY, whereas shared routing raises recall from 97.26\% to 98.49\%. Thus, these results support separating the representation gain from the routing effect. The hard-assignment control here is evaluated with hierarchical routing; the flat-search ablation in \cref{tab:ablation} is a distinct setting.
\begin{table}[htbp]
\centering
\small
\setlength{\tabcolsep}{4pt}
\caption{Matched controls on DCASE 2020. AUC and recall are percentages; latency is milliseconds.}
\label{tab:matched_controls}
\begin{tabular}{lccc}
\toprule
Method & Routed AUC & Recall & Latency \\
\midrule
PCA+KMeans & $64.71\pm0.29$ & $87.94\pm0.83$ & $6.81\pm0.28$ \\
MCDD+local kNN & $81.47\pm0.87$ & $94.18\pm0.62$ & $7.26\pm0.33$ \\
DeepCluster+local kNN & $84.62\pm0.72$ & $95.73\pm0.54$ & $6.32\pm0.27$ \\
vMF soft prototypes & $85.71\pm0.61$ & $96.21\pm0.49$ & $6.19\pm0.26$ \\
HARMONY hard & $87.34\pm0.49$ & $96.77\pm0.43$ & $5.57\pm0.23$ \\
HARMONY rep.+HNSW & $89.71\pm0.32$ & $97.26\pm0.34$ & $5.16\pm0.22$ \\
HARMONY full & $89.80\pm0.31$ & $98.49\pm0.27$ & $5.40\pm0.22$ \\
\bottomrule
\end{tabular}
\end{table}

\FloatBarrier
\section{Stress Tests under Extreme Conditions}
\label{app:stress_tests}

Our theoretical framework models informative signals within an orthogonal subspace. We conducted stress tests to observe degradation patterns when these assumptions are intentionally violated. The results indicate a gradual performance degradation, demonstrating robustness even when ideal geometric conditions are compromised. Specifically, \cref{fig:stress_tests} visualizes the performance degradation under non-linear manifold distortion and PCA-aligned anisotropic noise, showing that HARMONY maintains significant structural advantages over passive indexing even in severe scenarios.

\begin{figure}[ht]
    \centering
    \includegraphics[width=\linewidth]{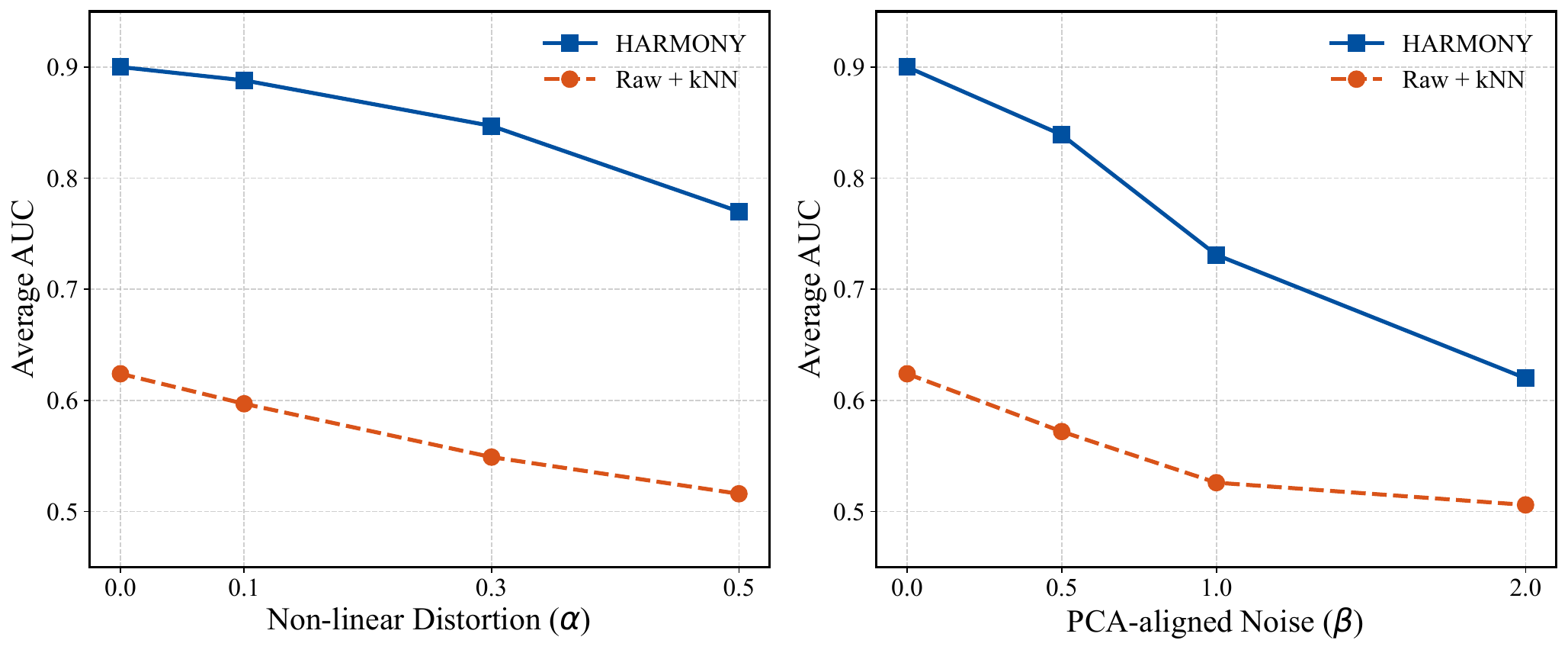}
    \caption{Performance degradation under severe manifold perturbations. (Left) Non-linear distortion eroding the geometric margin. (Right) PCA-aligned anisotropic noise violating orthogonal assumptions. HARMONY exhibits graceful decay compared to the baseline.}
    \label{fig:stress_tests}
\end{figure}
\subsection{Recorded Environmental Noise}
DCASE 2020 test clips are mixed with recorded environmental noise from DEMAND \citep{thiemann2013demand}. The evaluation covers fixed SNRs of $+6$, 0, and $-6$ dB, within-clip varying SNR, and environments excluded from model selection. All methods receive the same noisy audio in each of five seeds. \cref{tab:demand_noise} reports AUC; HARMONY exceeds Audio-PC by 13.31 points at $-6$ dB, 11.28 under dynamic SNR, and 11.85 in unseen environments.
\begin{table}[htbp]
\centering
\small
\setlength{\tabcolsep}{4pt}
\caption{AUC under recorded environmental noise (\%, mean $\pm$ standard deviation).}
\label{tab:demand_noise}
\begin{tabular}{lcccccc}
\toprule
Method & Original & $+6$ dB & 0 dB & $-6$ dB & Dynamic & Unseen \\
\midrule
Audio-PC & $83.90\pm0.20$ & $80.82\pm0.34$ & $75.47\pm0.51$ & $66.11\pm0.76$ & $71.84\pm0.64$ & $69.52\pm0.71$ \\
UniFlow & $75.40\pm0.60$ & $72.17\pm0.74$ & $66.32\pm0.91$ & $58.41\pm1.18$ & $63.18\pm1.02$ & $61.02\pm1.11$ \\
Dinomaly & $75.30\pm0.50$ & $72.64\pm0.63$ & $67.28\pm0.79$ & $60.17\pm1.04$ & $64.22\pm0.89$ & $62.34\pm0.97$ \\
HARMONY & $90.00\pm0.30$ & $88.73\pm0.42$ & $85.64\pm0.62$ & $79.42\pm0.89$ & $83.12\pm0.75$ & $81.37\pm0.82$ \\
\bottomrule
\end{tabular}
\end{table}

\FloatBarrier
\section{Hyperparameter Sensitivity and Unsupervised Selection}
\label{app:hyperparameters}

A practical anomaly detection system must maintain stability across broad configuration ranges. \cref{fig:sensitivity} evaluates the sensitivity of the model to the anchor count $K$, intrinsic dimensionality $d'$, and temperature $\tau$, demonstrating wide performance plateaus without requiring extensive numerical tuning. Furthermore, \cref{fig:silhouette_correlation} substantiates the strong global correlation between the unsupervised Silhouette Score and test AUC across all these variations and stress tests. This strong linear alignment provides a rigorous empirical basis for label-free model selection.

\begin{figure}[ht]
    \centering
    \includegraphics[width=\linewidth]{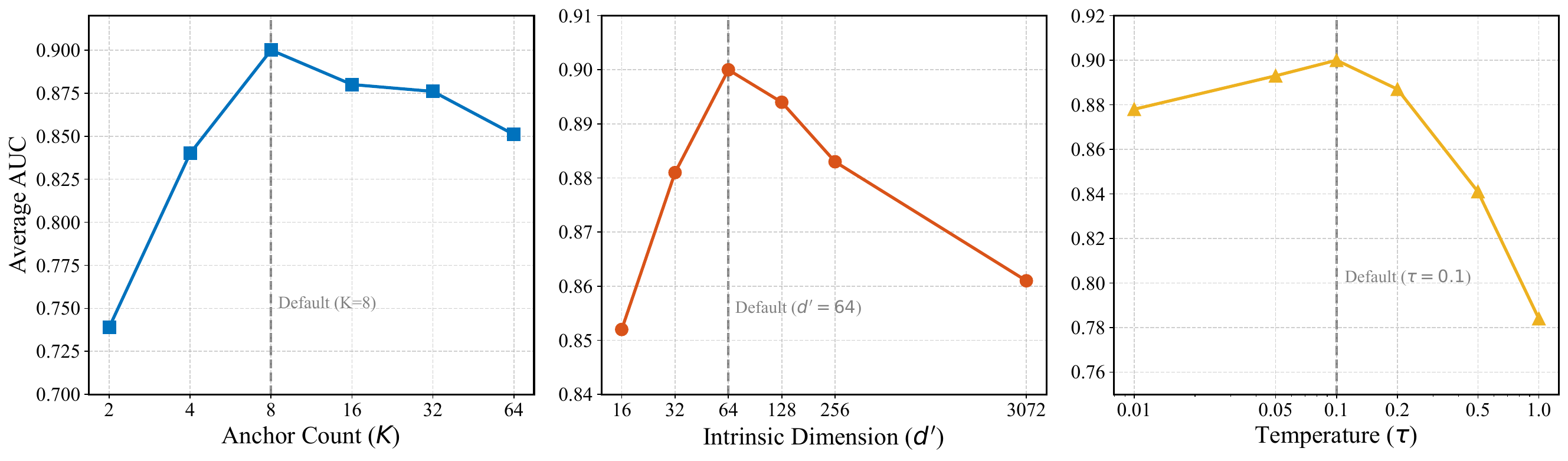}
    \caption{Hyperparameter sensitivity analysis. The framework demonstrates robust performance plateaus across broad ranges of anchor count $K$, intrinsic dimensionality $d'$, and softmax temperature $\tau$. The default configurations are situated within optimal stable regions.}
    \label{fig:sensitivity}
\end{figure}

\begin{figure}[ht]
    \centering
    \includegraphics[width=0.65\linewidth]{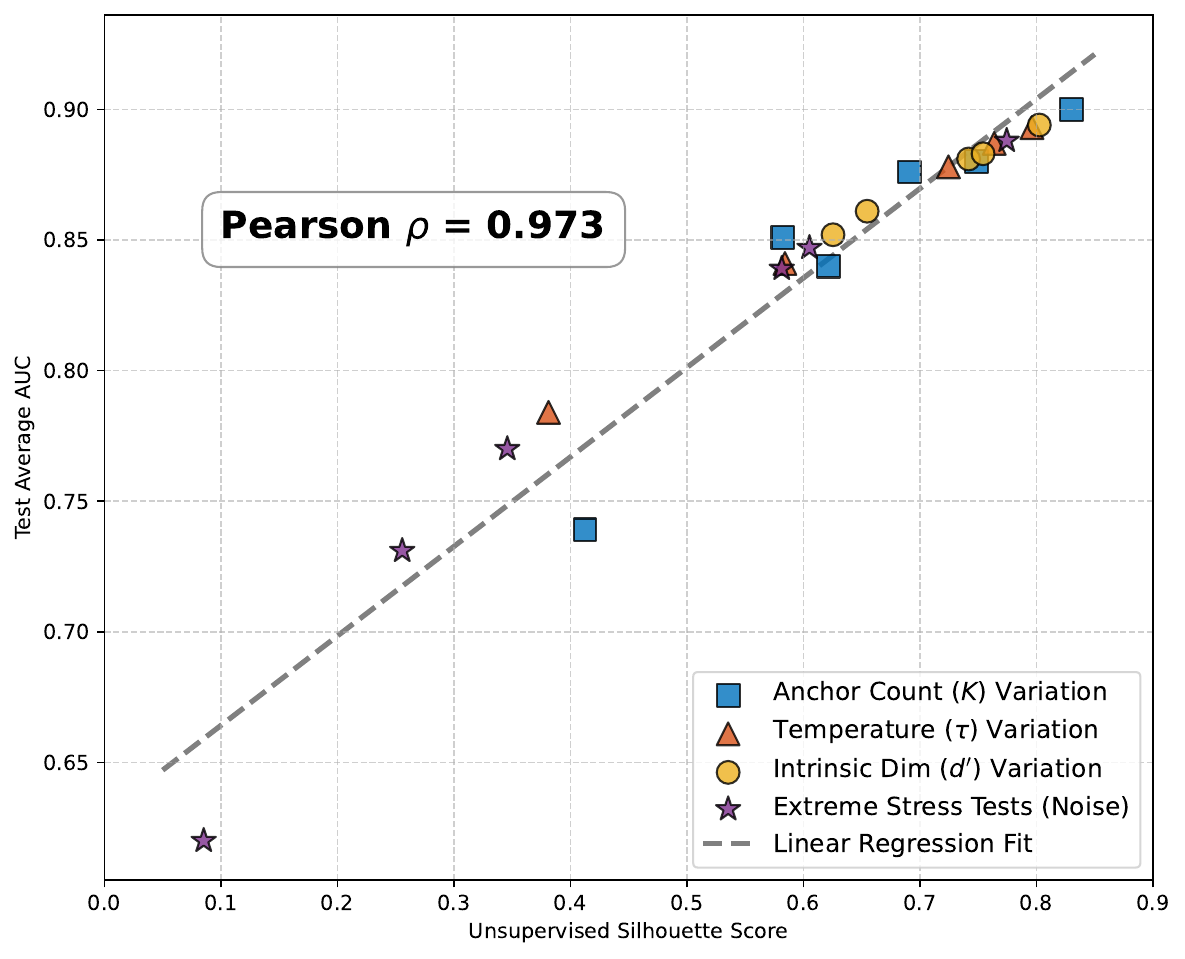}
    \caption{Global correlation between the unsupervised Silhouette Score and test Average AUC across all hyperparameter variations and extreme stress tests. The strong linear alignment (Pearson $\rho = 0.973$) validates the use of the Silhouette Score as a highly reliable surrogate metric for label-free model selection.}
    \label{fig:silhouette_correlation}
\end{figure}
\subsection{Selection Across Machine Mixtures}
For each candidate anchor count $K$, a partition is learned using pooled normal data and scored by its Silhouette. The highest-scoring candidate is selected without anomaly labels. Test-oracle AUC is used only for retrospective analysis. \cref{tab:mixture_selection} reports the mode and range of selected counts across five seeds; $\Delta_{oracle}$ is the AUC difference from the test oracle, and Spearman $\rho_s$ compares the rankings of candidates by Silhouette and AUC. The selected count adapts to mixture composition while remaining close to the oracle.
\begin{table}[htbp]
\centering
\small
\setlength{\tabcolsep}{4pt}
\caption{Normal-only selection for mixtures of $C$ machine types (AUC, \%).}
\label{tab:mixture_selection}
\begin{tabular}{lccccc}
\toprule
$C$ & Selected $K$ & Oracle $K$ & Selected / oracle AUC & $\Delta_{oracle}$ & $\rho_s$ \\
\midrule
2 & 4 / 2--6 & 4 / 4--8 & 92.84 / 93.11 & 0.27 & 0.91 \\
3 & 6 / 4--8 & 6 / 6--8 & 91.97 / 92.29 & 0.32 & 0.93 \\
4 & 8 / 6--12 & 8 / 8--12 & 91.31 / 91.58 & 0.27 & 0.94 \\
5 & 8 / 8--12 & 12 / 8--12 & 90.64 / 91.01 & 0.37 & 0.96 \\
6 & 8 / 8--12 & 8 / 8--16 & 90.07 / 90.26 & 0.19 & 0.97 \\
\bottomrule
\end{tabular}
\end{table}

The same criterion is applied to pooled MIMII machine IDs in \cref{tab:mimii_selection}. It selects a modal count of 12 rather than transferring the DCASE count of 8, reducing the test-oracle gap from 0.63 to 0.27 AUC points. These results correspond to the additional MIMII comparison in Appendix \ref{app:cross_dataset}.
\begin{table}[htbp]
\centering
\small
\setlength{\tabcolsep}{4pt}
\caption{MIMII selection and retrospective oracle comparison.}
\label{tab:mimii_selection}
\begin{tabular}{lccc}
\toprule
Setting & $K$ (mode / range) & AUC (\%) & $\Delta_{oracle}$ \\
\midrule
DCASE count transferred & 8 / 8--8 & $84.21\pm0.63$ & $0.63\pm0.28$ \\
Normal-only selection & 12 / 8--16 & $84.57\pm0.58$ & $0.27\pm0.19$ \\
Test oracle & 12 / 12--16 & $84.84\pm0.51$ & $0.00\pm0.00$ \\
\bottomrule
\end{tabular}
\end{table}

\subsection{Long-Tailed Mixtures and Temperature}
The imbalance ratio is $IR=n_{max}/n_{min}$ for normal-training machine-type counts; the test protocol is held fixed. For each of five seeds, $K$ and $\tau$ are selected by normal-data Silhouette. Tail AUC is the least frequent type's AUC, Head--Tail Gap is the most frequent type's AUC minus the tail AUC, and Entropy Ratio is anchor-occupancy entropy divided by $\log K$. As shown in \cref{tab:long_tail}, increasing imbalance reduces tail performance and occupancy entropy. At $IR\geq10$, the selected configuration uses more anchors and a smaller temperature.
\begin{table}[htbp]
\centering
\small
\setlength{\tabcolsep}{4pt}
\caption{Joint selection under long-tailed training mixtures. AUC and gaps use percentage-point units; entropy is dimensionless.}
\label{tab:long_tail}
\begin{tabular}{lcccccc}
\toprule
$IR$ & $K/\tau$ & Macro AUC & Tail AUC & Head--Tail & Entropy & $\Delta_{oracle}$ \\
\midrule
1 & 8 / 0.10 & $90.00\pm0.30$ & $89.78\pm0.42$ & $0.34\pm0.31$ & $0.98\pm0.01$ & $0.18\pm0.14$ \\
3 & 8 / 0.10 & $89.31\pm0.44$ & $88.16\pm0.68$ & $1.88\pm0.55$ & $0.96\pm0.02$ & $0.28\pm0.20$ \\
5 & 8 / 0.10 & $88.54\pm0.57$ & $86.63\pm0.87$ & $3.23\pm0.73$ & $0.93\pm0.02$ & $0.37\pm0.25$ \\
10 & 16 / 0.05 & $86.96\pm0.82$ & $83.41\pm1.24$ & $5.76\pm1.08$ & $0.89\pm0.03$ & $0.58\pm0.34$ \\
20 & 16 / 0.05 & $84.83\pm1.06$ & $79.36\pm1.72$ & $8.97\pm1.46$ & $0.83\pm0.04$ & $0.84\pm0.47$ \\
\bottomrule
\end{tabular}
\end{table}

To isolate temperature, \cref{tab:temperature_fixed} fixes $K=8$. A global temperature does not change maximum-similarity boundaries for fixed representations, but it changes the training probabilities and gradients. The smaller temperature is less effective on balanced data and more effective under the evaluated imbalance. These results motivate the density-aware extension discussed in Appendix \ref{app:limitations_impacts}.
\begin{table}[htbp]
\centering
\small
\setlength{\tabcolsep}{4pt}
\caption{Fixed-$K$ temperature comparison (AUC, \%; $K=8$).}
\label{tab:temperature_fixed}
\begin{tabular}{lccc}
\toprule
$IR$ & $\tau=0.05$ & $\tau=0.10$ & Difference \\
\midrule
1 & $89.34\pm0.35$ & $90.00\pm0.30$ & $-0.66$ \\
10 & $86.44\pm0.79$ & $85.91\pm0.91$ & $+0.53$ \\
20 & $84.11\pm1.04$ & $82.86\pm1.27$ & $+1.25$ \\
\bottomrule
\end{tabular}
\end{table}

\subsection{Rare-Type Absorption}
The two largest machine types are designated frequent and the two smallest rare. Absorption is the fraction of rare clips assigned to anchors dominated by frequent types. \cref{tab:rare_absorption} shows that rare-type separation deteriorates with imbalance, although HARMONY retains more rare-type structure than the compared alternatives at $IR=20$. At that imbalance, changing $K$ from 8 to the selected 16 reduces absorption from 0.39 to 0.29 and raises rare-type AUC from 75.91\% to 79.36\%.
\begin{table}[htbp]
\centering
\small
\setlength{\tabcolsep}{4pt}
\caption{Rare-type behavior (five-seed mean $\pm$ standard deviation).}
\label{tab:rare_absorption}
\begin{tabular}{lcccc}
\toprule
Method / $IR$ & $K$ & Rare-type anchors & Absorption & Rare-type AUC (\%) \\
\midrule
HARMONY / 1 & 8 & $2.67\pm0.31$ & $0.05\pm0.02$ & $89.78\pm0.42$ \\
HARMONY / 5 & 8 & $2.14\pm0.26$ & $0.14\pm0.04$ & $86.63\pm0.87$ \\
HARMONY / 10 & 16 & $1.87\pm0.24$ & $0.18\pm0.04$ & $83.41\pm1.24$ \\
HARMONY / 20 & 16 & $1.19\pm0.22$ & $0.29\pm0.05$ & $79.36\pm1.72$ \\
MCDD / 20 & 16 & $0.87\pm0.21$ & $0.47\pm0.07$ & $66.21\pm1.48$ \\
PCA-KMeans / 20 & 8 & $0.23\pm0.18$ & $0.64\pm0.08$ & $53.92\pm1.57$ \\
\bottomrule
\end{tabular}
\end{table}

\FloatBarrier
\section{Implementation and Baseline Adaptation Details}
\label{app:implementation}

\subsection{Network Architecture}
We use \textbf{BYOL-A} \citep{niizumi2022byol} as the frozen audio backbone, pre-trained on AudioSet \citep{gemmeke2017audio}.
\begin{itemize}
    \item \textbf{Input:} Audio samples are converted to Log-Mel spectrograms with dimensions $64 \times 96$ ($F \times T$).
    \item \textbf{Feature Aggregation:} Following PatchCore \citep{roth2022towards}, we extract and concatenate the globally average-pooled outputs from intermediate blocks (Block 2 and Block 3). This results in a feature vector $z \in \mathbb{R}^{D}$ where $D=3072$.
    \item \textbf{Freezing:} The backbone weights are frozen in all experiments to evaluate the efficacy of the proposed manifold learning on fixed representations.
\end{itemize}

The projection head $g_\theta$ is a Multi-Layer Perceptron (MLP) with a bottleneck structure mapping the $3072$-dim input to a $128$-dim intrinsic space:
\begin{equation}
    \mathbb{R}^{3072} \xrightarrow{1024} \mathbb{R}^{1024} \xrightarrow{512} \mathbb{R}^{512} \xrightarrow{128} \mathbb{R}^{128}.
\end{equation}
The final output is $\ell_2$-normalized to the unit hypersphere $\mathbb{S}^{127}$.

\subsection{Principled Baseline Adaptation Setup}
To ensure fair comparison under the universal protocol without weakening the baselines, all methods utilized the identical frozen BYOL-A features and were tuned within equivalent hyperparameter budgets. The precise adaptations to handle high-dimensional acoustic manifolds are detailed below:
\begin{itemize}
    \item \textbf{PatchCore:} Retained Coreset-based k-NN. PCA (3072$\rightarrow$256) was utilized solely for memory-efficient coreset sampling, while the final $L_2$ distance was strictly calculated in the original 3072D space.
    \item \textbf{EfficientAD:} Removed image-teacher modality bias. Utilized frozen BYOL-A as the Teacher. The Student network was expanded to 2048D hidden layers to accommodate high-dimensional features.
    \item \textbf{SimpleNet:} Discriminator adapted to Linear(3072$\rightarrow$1024) $\rightarrow$ BN $\rightarrow$ LeakyReLU. Noise $\sigma$ searched within \{0.01, 0.05, 0.1\}.
    \item \textbf{CADA:} Projector adapted to 3072$\rightarrow$1024$\rightarrow$128 $\rightarrow$ L2 Norm, optimizing a learnable single 128D center.
    \item \textbf{UniFlow:} Normalizing flows configured with deterministic PCA (3072$\rightarrow$256) for stability, utilizing 8-layer Coupling Layers (256$\rightarrow$512$\rightarrow$256).
    \item \textbf{Dinomaly:} Adapter (3072$\rightarrow$512) followed by a 2-layer Encoder and 2-layer Decoder (4 heads), maintaining the strict bottleneck scaling of the original framework.
    \item \textbf{MCDD:} Projection optimized through 3072$\rightarrow$1024$\rightarrow$512 with 16 learnable prototypes via multi-class cross-entropy.
\end{itemize}

\subsection{Training Configuration}
We optimize using Adam ($\beta_1=0.9, \beta_2=0.999$, weight decay $10^{-4}$) governed by a Cosine Annealing scheduler ($10^{-3}$ to $10^{-5}$). \cref{tab:hyperparams} summarizes the default settings.

\begin{table}[ht] 
\normalsize
\renewcommand{\arraystretch}{1.15}
\setlength{\tabcolsep}{10pt}
\caption{Default hyperparameter settings.}
\label{tab:hyperparams}
\centering
\begin{tabular}{llcl}
\toprule
Category & Parameter & Symbol & Value \\
\midrule
\textbf{Data} & Sample Rate & $sr$ & 16,000 Hz \\
 & FFT / Hop & - & 1024 / 160 \\
 & Mel Bins & $N_{mels}$ & 64 \\
\midrule
\textbf{Model} & Input Dim & $D$ & 3072 \\
 & Latent Dim & $d$ & 128 \\
\midrule
\textbf{Train} & Epochs & $E$ & 20 (DCASE) \\
 & Batch Size & $B_{sz}$ & 128 \\
 & Learning Rate & $\eta$ & $1 \times 10^{-3}$ \\
\midrule
\textbf{Method} & Anchors & $K$ & 8 \\
 & Temperature & $\tau$ & 0.1 \\
 & Buckets & $B$ & 64 \\
\bottomrule
\end{tabular}
\end{table}

\subsection{Environment}
Inference latency is measured strictly on an independent server to avoid resource contention.
\begin{itemize}
    \item \textbf{Hardware:} Training on NVIDIA Tesla V100. Edge efficiency tests on an Intel Xeon Platinum 8163 CPU.
    \item \textbf{Latency Protocol:} CPU inference is restricted to a single thread. The reported latency encapsulates the entire forward pass, including feature extraction via the frozen BYOL-A backbone, representing the average processing time per single audio sample.
    \item \textbf{Software:} PyTorch 1.13.1 and Faiss 1.7.4.
\end{itemize}

\FloatBarrier
\section{Datasets and Evaluation Protocol}
\label{app:datasets}

\subsection{Dataset Details}
We evaluate on \textbf{DCASE 2020 Task 2} and the \textbf{MIMII Dataset (-6dB subset)}. \cref{tab:dcase_stats} and \cref{tab:mimii_stats} detail the exact sample distributions.

\begin{table}[ht]
\normalsize
\renewcommand{\arraystretch}{1.15}
\setlength{\tabcolsep}{8pt}
\caption{Detailed statistics of the DCASE 2020 Universal Dataset.}
\label{tab:dcase_stats}
\centering
\begin{tabular}{llccc}
\toprule
Machine Type & Machine IDs & Train (Normal) & Test (Normal) & Test (Anomaly) \\
\midrule
ToyCar      & 01, 02, 03, 04 & 4,000 & 1,400 & 1,059 \\
ToyConveyor & 01, 02, 03     & 3,000 & 2,399 & 1,110 \\
Fan         & 00, 02, 04, 06 & 3,675 & 400   & 1,475 \\
Pump        & 00, 02, 04, 06 & 3,349 & 400   & 456   \\
Slider      & 00, 02, 04, 06 & 2,804 & 400   & 890   \\
Valve       & 00, 02, 04, 06 & 3,291 & 400   & 479   \\
\midrule
\textbf{Total} & \textbf{23 IDs} & \textbf{20,119} & \textbf{5,399} & \textbf{5,469} \\
\bottomrule
\end{tabular}
\end{table}

\begin{table}[ht]
\normalsize
\renewcommand{\arraystretch}{1.15}
\setlength{\tabcolsep}{8pt}
\caption{Detailed statistics of the MIMII -6dB Dataset.}
\label{tab:mimii_stats}
\centering
\begin{tabular}{llccc}
\toprule
Machine Type & Machine IDs & Train (Normal) & Test (Normal) & Test (Anomaly) \\
\midrule
Fan    & 00, 02, 04, 06 & 3,261 & 814 & 1,475 \\
Pump   & 00, 02, 04, 06 & 3,000 & 749 & 456   \\
Slider & 00, 02, 04, 06 & 2,566 & 638 & 890   \\
Valve  & 00, 02, 04, 06 & 2,954 & 737 & 479   \\
\midrule
\textbf{Total} & \textbf{16 IDs} & \textbf{11,781} & \textbf{2,938} & \textbf{3,300} \\
\bottomrule
\end{tabular}
\end{table}

\subsection{Universal Unsupervised Protocol Definition}
To standardize the evaluation:
\begin{itemize}
    \item \textbf{Label Blindness:} The training algorithm input is strictly $\{x_i\}$, excluding any machine type or ID labels.
    \item \textbf{Single Model:} All data is processed by a single set of network parameters $\theta$ and a single index structure $\mathcal{T}$.
    \item \textbf{Standardized Eval:} During testing, anomaly scores are computed for each sample, and the AUC is aggregated by grouping samples based on ground-truth machine labels, reporting the Macro-Average AUC.
\end{itemize}
\FloatBarrier
\section{Detailed Experimental Results}
\label{app:detailed_results}

This section provides itemized performance metrics for each machine ID across both benchmarks.

\subsection{DCASE 2020 Per-ID Performance}
\cref{tab:dcase_per_id} reports the specific AUC metrics for all 23 machine IDs in the DCASE 2020 dataset.

\begin{table}[ht]
\normalsize
\renewcommand{\arraystretch}{1.15}
\setlength{\tabcolsep}{5pt}
\caption{Per-ID performance on DCASE 2020. Metrics are reported in percentages (\%).}
\label{tab:dcase_per_id}
\centering
\begin{tabular}{lcccc|lcccc}
\toprule
\textbf{Type} & \textbf{ID} & \textbf{AUC} & \textbf{pAUC} & \textbf{Time} & 
\textbf{Type} & \textbf{ID} & \textbf{AUC} & \textbf{pAUC} & \textbf{Time} \\
\midrule
ToyCar      & 01 & 97.78 & 88.89 & 5.39 ms & Pump        & 00 & 89.08 & 74.53 & 5.34 ms \\
            & 02 & 97.97 & 90.51 & 5.12 ms &             & 02 & 85.50 & 65.39 & 5.11 ms \\
            & 03 & 97.77 & 88.80 & 5.46 ms &             & 04 & 89.40 & 80.68 & 5.64 ms \\
            & 04 & 98.42 & 92.42 & 5.33 ms &             & 06 & 90.77 & 79.51 & 5.57 ms \\
\cmidrule(r){1-5} \cmidrule(l){6-10}
ToyConveyor & 01 & 92.15 & 75.47 & 5.06 ms & Slider      & 00 & 93.52 & 80.53 & 5.65 ms \\
            & 02 & 88.94 & 71.41 & 5.16 ms &             & 02 & 89.51 & 72.26 & 5.62 ms \\
            & 03 & 88.04 & 68.55 & 5.14 ms &             & 04 & 90.86 & 75.96 & 5.60 ms \\
\cmidrule(r){1-5}
Fan         & 00 & 89.88 & 75.07 & 5.31 ms &             & 06 & 84.87 & 67.95 & 5.13 ms \\
\cmidrule(l){6-10}
            & 02 & 88.16 & 70.63 & 5.27 ms & Valve       & 00 & 89.68 & 74.17 & 5.38 ms \\
            & 04 & 84.25 & 62.60 & 5.87 ms &             & 02 & 88.18 & 67.72 & 5.26 ms \\
            & 06 & 82.52 & 65.45 & 5.36 ms &             & 04 & 88.23 & 70.18 & 5.70 ms \\
            &    &       &       &         &             & 06 & 84.29 & 66.32 & 5.95 ms \\
\midrule
\rowcolor{harmonyblue}
\multicolumn{5}{c|}{} & \textbf{Mean (All)} & - & \textbf{90.00} & \textbf{75.00} & \textbf{5.41 ms} \\
\bottomrule
\end{tabular}
\end{table}

\subsection{MIMII -6dB Per-ID Performance}
\cref{tab:mimii_per_id} reports the specific AUC metrics for the MIMII -6dB dataset.

\begin{table}[ht]
\normalsize
\renewcommand{\arraystretch}{1.15}
\setlength{\tabcolsep}{5pt}
\caption{Per-ID performance on MIMII -6dB. Metrics are reported in percentages (\%).}
\label{tab:mimii_per_id}
\centering
\begin{tabular}{lcccc|lcccc}
\toprule
\textbf{Type} & \textbf{ID} & \textbf{AUC} & \textbf{pAUC} & \textbf{Time} & 
\textbf{Type} & \textbf{ID} & \textbf{AUC} & \textbf{pAUC} & \textbf{Time} \\
\midrule
Fan         & 00 & 84.54 & 65.68 & 5.63 ms & Slider      & 00 & 90.72 & 78.32 & 5.24 ms \\
            & 02 & 83.96 & 64.73 & 5.92 ms &             & 02 & 86.24 & 67.76 & 5.39 ms \\
            & 04 & 81.16 & 63.10 & 5.93 ms &             & 04 & 84.23 & 69.59 & 5.38 ms \\
            & 06 & 70.21 & 61.68 & 5.21 ms &             & 06 & 82.13 & 66.08 & 5.79 ms \\
\cmidrule(r){1-5} \cmidrule(l){6-10}
Pump        & 00 & 90.23 & 72.59 & 5.48 ms & Valve       & 00 & 80.72 & 62.17 & 5.18 ms \\
            & 02 & 89.20 & 73.27 & 5.61 ms &             & 02 & 83.95 & 67.34 & 5.10 ms \\
            & 04 & 86.33 & 75.38 & 5.56 ms &             & 04 & 87.80 & 70.35 & 5.62 ms \\
            & 06 & 87.77 & 71.29 & 5.06 ms &             & 06 & 85.39 & 70.95 & 5.17 ms \\
\midrule
\rowcolor{harmonyblue}
\multicolumn{5}{c|}{} & \textbf{Mean (All)} & - & \textbf{84.70} & \textbf{69.10} & \textbf{5.45 ms} \\
\bottomrule
\end{tabular}
\end{table}
\FloatBarrier
\section{Limitations and Broader Impacts}
\label{app:limitations_impacts}

\paragraph{Limitations.} 
While HARMONY achieves state-of-the-art performance in metadata-free machinery acoustic anomaly detection, it naturally inherits certain limitations from its geometric assumptions. First, the method relies on a pre-trained foundation feature space (e.g., BYOL-A) to provide meaningful initial local distances. If the domain shift of the target acoustic environment is overwhelmingly extreme (e.g., entirely unrepresented in the pre-training data), the initial topological skeleton extraction may become suboptimal. Second, the soft assignment mechanism currently utilizes a uniform temperature scaling factor $\tau$. For highly imbalanced mixed-source scenarios where some machine acoustic modes are drastically denser than others, incorporating an adaptive, density-aware temperature parameter could further improve the Voronoi partition margins. We plan to explore density-aware projection functions in future work.

\paragraph{Broader Societal Impacts.}
Our research primarily focuses on improving unsupervised acoustic anomaly detection (ASD) for industrial environments, aiming to identify machine failures reliably without relying on strict metadata. 
\textit{Positive impacts:} Implementing robust ASD systems like HARMONY can significantly enhance industrial safety, reduce catastrophic equipment failures, and lower environmental risks associated with industrial accidents. Furthermore, by alleviating the need for precise metadata labeling, it lowers the barrier to deploying predictive maintenance in resource-constrained environments.
\textit{Potential negative impacts:} As with many automated surveillance and monitoring technologies, there is a theoretical risk of misuse if the underlying acoustic feature extractors capture unintended sensitive auditory information (e.g., human speech or activities in the factory background). However, since our anomaly scoring relies exclusively on aggregated, anchor-aligned geometric density on a hypersphere rather than reconstructing raw audio, the risk of directly exposing raw sensitive content is intrinsically mitigated. We recommend that deployments of such systems adhere strictly to workplace privacy regulations and ensure that raw audio streams are not permanently stored.
\FloatBarrier
\section{Cross-Dataset Baseline Comparisons}
\label{app:cross_dataset}
All results in this section are mean $\pm$ standard deviation over five random seeds under a metadata-free machinery protocol. Normal training samples are pooled; machine labels are used for evaluation grouping. These additional five-seed evaluations are reported separately from the original per-ID results in Appendix \ref{app:detailed_results}; their aggregate values should not be combined. The normal-only MIMII selection is reported in \cref{tab:mimii_selection}.

\subsection{MIMII at $-6$ dB}
We compare all eleven baselines on a common MIMII split. Macro AUC and pAUC average groupwise scores; Worst-ID AUC is the minimum group AUC. The pAUC uses maximum false-positive rate 0.1. \cref{tab:mimii_baselines} shows improvements over Audio-PC of 4.84, 1.90, and 4.12 points, respectively.
\begin{table}[htbp]
\centering
\small
\setlength{\tabcolsep}{4pt}
\caption{MIMII $-6$ dB baseline comparison (\%).}
\label{tab:mimii_baselines}
\begin{tabular}{lccc}
\toprule
Method & Macro AUC & Macro pAUC & Worst-ID AUC \\
\midrule
PatchCore & $58.74\pm0.41$ & $53.92\pm0.36$ & $49.18\pm0.83$ \\
SimpleNet & $57.93\pm0.87$ & $52.84\pm0.72$ & $47.62\pm1.34$ \\
EfficientAD & $62.18\pm0.62$ & $56.73\pm0.51$ & $51.37\pm1.06$ \\
Gen-Rep & $77.86\pm0.48$ & $65.82\pm0.41$ & $63.94\pm0.92$ \\
Audio-PC & $79.73\pm0.44$ & $67.11\pm0.38$ & $65.72\pm0.87$ \\
CADA & $68.41\pm0.92$ & $59.28\pm0.76$ & $54.13\pm1.42$ \\
UniFlow & $71.24\pm0.79$ & $61.47\pm0.67$ & $56.84\pm1.28$ \\
Dinomaly & $72.06\pm0.71$ & $62.18\pm0.59$ & $57.61\pm1.17$ \\
GRHD & $61.37\pm0.66$ & $56.08\pm0.54$ & $50.72\pm1.11$ \\
MCDD & $75.48\pm1.04$ & $64.19\pm0.86$ & $60.81\pm1.63$ \\
AudDSR & $76.92\pm1.21$ & $65.04\pm1.02$ & $62.27\pm1.81$ \\
HARMONY & $84.57\pm0.58$ & $69.01\pm0.49$ & $69.84\pm1.27$ \\
\bottomrule
\end{tabular}
\end{table}

\subsection{DCASE 2022 Domain Shifts}
DCASE 2022 Task 2 \citep{dcase2022task2} introduces changes between source and target operating domains. Machine-type and section identifiers are withheld from model training and routing. Overall and target-domain AUC and the minimum section AUC are reported in \cref{tab:dcase2022_baselines}. HARMONY improves over Audio-PC by 4.52, 4.27, and 4.15 points, respectively.
\begin{table}[htbp]
\centering
\small
\setlength{\tabcolsep}{4pt}
\caption{DCASE 2022 comparison (AUC, \%).}
\label{tab:dcase2022_baselines}
\begin{tabular}{lccc}
\toprule
Method & Overall & Target & Worst section \\
\midrule
Audio-PC & $77.84\pm0.46$ & $73.72\pm0.59$ & $64.27\pm0.99$ \\
UniFlow & $72.69\pm0.81$ & $68.43\pm0.98$ & $58.92\pm1.48$ \\
Dinomaly & $73.21\pm0.73$ & $68.97\pm0.91$ & $59.41\pm1.39$ \\
HARMONY & $82.36\pm0.52$ & $77.99\pm0.68$ & $68.42\pm1.12$ \\
\bottomrule
\end{tabular}
\end{table}

\FloatBarrier
\section{Robustness to Training Contamination}
\label{app:contamination}
The standard setting in Section \ref{sec:preliminaries} assumes normal-only training data. To evaluate deviations from this assumption, anomalous clips are injected at contamination rates of 0.5\%, 1\%, 2\%, and 5\%. Every candidate contaminant is removed from the test set in all conditions, including the uncontaminated condition, preventing overlap between injected training anomalies and test clips. All methods are evaluated over five seeds. The uncontaminated baseline here therefore differs from the original DCASE 2020 experiment.

\cref{tab:contamination} shows that HARMONY drops by 0.61, 1.36, and 2.83 AUC points at 0.5\%, 1\%, and 2\% contamination, respectively. At 5\%, its AUC is 83.31\%, compared with 69.41\% for Audio-PC. Contamination still reduces performance, so the clean-training assumption remains relevant.
\begin{table}[htbp]
\centering
\small
\setlength{\tabcolsep}{4pt}
\caption{Training contamination on DCASE 2020. Columns are contamination rates; entries are AUC (\%, mean $\pm$ standard deviation).}
\label{tab:contamination}
\begin{tabular}{lccccc}
\toprule
Method & 0\% & 0.5\% & 1\% & 2\% & 5\% \\
\midrule
Raw+1NN & $62.18\pm0.23$ & $59.84\pm0.38$ & $57.63\pm0.51$ & $54.21\pm0.74$ & $50.47\pm1.06$ \\
PatchCore & $61.91\pm0.25$ & $60.42\pm0.36$ & $58.73\pm0.48$ & $55.97\pm0.69$ & $52.86\pm0.94$ \\
Audio-PC & $83.74\pm0.27$ & $82.16\pm0.39$ & $80.34\pm0.53$ & $76.88\pm0.77$ & $69.41\pm1.12$ \\
MCDD & $78.09\pm1.03$ & $77.18\pm1.08$ & $75.93\pm1.17$ & $73.47\pm1.31$ & $68.26\pm1.57$ \\
HARMONY & $89.65\pm0.35$ & $89.04\pm0.43$ & $88.29\pm0.57$ & $86.82\pm0.76$ & $83.31\pm1.13$ \\
\bottomrule
\end{tabular}
\end{table}

\end{document}